\documentclass[a4paper, onecolumn]{quantumarticle}
\pdfoutput=1

\usepackage{chngcntr}
\counterwithin{figure}{section}

\usepackage{graphicx}
\graphicspath{ {./img/} }

\usepackage{tikz}
\usepackage{amsmath}
\usepackage{hyperref}
\usepackage{amsfonts}
\usepackage{amsthm}
\usepackage{amssymb}
\usepackage{enumerate}
\usepackage{float}
\usepackage{stmaryrd}
\usepackage{mathtools}
\usepackage{tikzit}

\tikzstyle{red dot}=[fill={rgb,255: red,232; green,165; blue,165}, draw=black, shape=circle, tikzit category=ZX, tikzit fill=red]
\tikzstyle{green dot}=[fill={rgb,255: red,216; green,248; blue,216}, draw=black, shape=circle, tikzit category=ZX, tikzit fill=green]
\tikzstyle{yellow H}=[fill=yellow, draw=black, shape=rectangle, tikzit category=ZX]
\tikzstyle{white dot}=[fill=white, draw=black, shape=circle, tikzit category=ZH]
\tikzstyle{white box}=[fill=white, draw=black, shape=rectangle, tikzit category=ZH]
\tikzstyle{dark}=[fill={rgb,255: red,200; green,200; blue,200}, draw=black, shape=circle, tikzit category=ZH, tikzit fill=gray]
\tikzstyle{halfscalar}=[star, fill=black, draw=black, minimum size=8pt, inner sep=0pt]
\tikzstyle{multiline text}=[align=center]
\tikzstyle{smallblack}=[fill=black, draw=black, shape=circle, inner sep=0pt, minimum size=3pt]

\tikzstyle{blue dashed}=[-, draw=blue, densely dashed]
\tikzstyle{red dashed}=[-, draw=red, densely dashed]
\tikzstyle{purple dashed}=[-, draw=purple, densely dashed]
\tikzstyle{empty}=[-, fill={rgb,255: red,191; green,191; blue,191}, draw={rgb,255: red,191; green,191; blue,191}]
\tikzstyle{green}=[-, draw=green]
\tikzstyle{brace edge}=[-, tikzit draw=blue, decorate, decoration={brace,amplitude=1mm,raise=-1mm}]
\usetikzlibrary{quantikz2}
\usepackage{subfig}
\usepackage{multirow, tabularx}
\usepackage{adjustbox}
\usepackage[section]{placeins}


\newcommand{\doubleZ}{\mathbb{Z}}


\newcommand{\RM}{\mathrm}
\newcommand{\counting}[1][]{\mathord{\#}\mathrm #1}

\newcommand{\npsharpp}{\RM{NP}^{\counting{P}}}
\newcommand{\npcep}{\RM{NP}^{\RM{C_=P}[1]}}
\newcommand{\cep}{\RM{C_=P}}
\newcommand{\paths}{\counting{Paths}}
\newcommand{\apaths}{\counting{AccPaths}}
\newcommand{\rpaths}{\counting{RejPaths}}

\newcommand{\problem}[1]{\mathbf{#1}}
\newcommand{\StateEq}{\problem{StateEq}}
\newcommand{\ContainsEntry}{\problem{ContainsEntry}}
\newcommand{\ContainsEntryk}{\problem{ContainsEntry_k}}
\newcommand{\CircuitExtraction}{\problem{CircuitExtraction}}
\newcommand{\SAT}{\problem{SAT}}
\newcommand{\cSAT}{\problem{\#SAT}} 
\newcommand{\ComparecSAT}{\problem{Compare\#SAT}}
\newcommand{\ScalarDiagram}{\problem{ScalarDiagram}}
\newcommand{\SATandComparecSAT}{\problem{SAT\&Compare\#SAT}}
\newcommand{\CompareDiagrams}{\problem{CompareDiagrams}}
\newcommand{\IsZero}{\problem{IsZero}}

\newcommand{\csat}{\problem{\#SAT}} 

\newcommand{\diageval}[1][]{\left\llbracket #1 \right\rrbracket}

\newcommand{\Rho}{P}

\newcommand{\wedgevee}{%
  \mathbin{{\wedge}\mkern-5mu{\vee}}%
}

\usepackage{xcolor}

\theoremstyle{plain}\newtheorem{theorem}{Theorem}[section]
\theoremstyle{plain}\newtheorem{corollary}[theorem]{Corollary}
\theoremstyle{plain}\newtheorem{lemma}[theorem]{Lemma}
\theoremstyle{definition}\newtheorem{definition}[theorem]{Definition}
\theoremstyle{plain}
\theoremstyle{plain}
\theoremstyle{definition}
\theoremstyle{definition}
\theoremstyle{definition}\newtheorem{example}[theorem]{Example}

\providecommand{\keywords}[1]
{
  \textbf{Keywords: } #1
}

\begin{document}
\title{Constructing $\mathrm{NP}^{\mathord{\#}\mathrm P}$-complete problems and \mbox{${\mathord{\#}\mathrm P}$-hardness} of circuit extraction in phase-free ZH}

\author{Piotr Mitosek}
\affiliation{School of Computer Science, University of Birmingham}
\email{pbm148@student.bham.ac.uk}

\begin{abstract}
\sloppy The ZH calculus is a graphical language for quantum computation reasoning.
The phase-free variant offers a simple set of generators that guarantee universality.
ZH calculus is effective in MBQC and analysis of quantum circuits constructed with the universal gate set Toffoli+H.
While circuits naturally translate to ZH diagrams, finding an ancilla-free circuit equivalent to a given diagram is hard.
Here, we show that circuit extraction for phase-free ZH calculus is $\RM{\#P}$-hard, extending the existing result for ZX calculus.
Another problem believed to be hard is comparing whether two diagrams represent the same process.
We show that two closely related problems are \mbox{$\mathrm{NP}^{\mathord{\#}\mathrm P}$-complete}.
The first problem is: given two processes represented as diagrams, determine the existence of a computational basis state on which they equalize.
The second problem is checking whether the matrix representation of a given diagram contains an entry equal to a given number.
Our proof adapts the proof of Cook-Levin theorem to a reduction from a non-deterministic Turing Machine with access to ${\mathord{\#}\mathrm P}$ oracle.
\end{abstract}

\keywords{ZH calculus, circuit extraction, $\RM{\#P}$, computational complexity, oracle complexity}


\section{Introduction}
Graphical representation of the quantum processes via string diagrams \cite{coeckePicturingQuantumProcesses2017} is a quickly developing area with a new view on the most important problems of theoretical quantum computation. String diagrams offer a concise and precise representation of quantum processes and transformations they can undergo \cite{coeckeInteractingQuantumObservables2011}. Graphical calculi such as ZX calculus \cite{coeckeInteractingQuantumObservables2011, vandeweteringZXcalculusWorkingQuantum2020} have been successfully used to tackle the circuit optimization \cite{duncanGraphtheoreticSimplificationQuantum2020, staudacherReducing2QuBitGate2023}, error correction \cite{duncanVerifyingSteaneCode2014}, and many other problems. ZH calculus, introduced in \cite{backensZHCompleteGraphical2019}, has been applied when reasoning about measurement-based quantum computation \cite{kupperAnalysisQuantumHypergraph2020} and circuits constructed with the Toffoli gate \cite{kuijpersGraphicalFourierTheory2019}. Normally, ZH features generators with a phase label that ranges from $0$ to $2\pi$. The phase-free variant \cite{backensCompletenessZHcalculus2023} offers a simple set of generators that guarantee universality and closely relate to circuits constructed with the Toffoli+H universal gate set, without arbitrary phases in spider generators.

The standard notation of quantum computation uses circuits. Rewrite rules for quantum circuits exist (for example, see \cite{clementCompleteEquationalTheory2023a}), but can get very complicated, while diagram rewriting rules are usually compact. In graphical calculi, a few generators are sufficient for universality, while the addition of appropriate rewrite rules may offer completeness for various parts of quantum mechanics \cite{jeandelCompleteAxiomatisationZXCalculus2018a, backensCompletenessZXcalculus2016, ngUniversalCompletionZXcalculus2017, backensCompletenessZHcalculus2023}. Thus, one can often benefit from the translation of the quantum process to a string diagram. Typical uses of graphical calculi follow the same pattern. First, a circuit or other presentation of quantum computation like an MBQC scheme is translated to a graphical calculus. Next, the obtained diagram is modified via rewrite rules. Finally, a circuit is extracted from the new diagram. The circuit extraction, in general, is $\counting{P}$-hard \cite{debeaudrapCircuitExtractionZXDiagrams2022c}. The existence of various flow structures guarantees fast extraction \cite{duncanGraphtheoreticSimplificationQuantum2020, backensThereBackAgain2021, simmonsRelatingMeasurementPatterns2021}. Sometimes these properties can be preserved while rewriting diagrams \cite{mcelvanneyCompleteFlowPreservingRewrite2023}. Circuit extraction is vital, as circuits are the machine language for most existing physical quantum computers.

The most natural problem to ask about any model of computation is to say whether two computation schemes are equivalent. For graphical calculi, this problem is comparing diagrams:

\begin{quote}
    $\CompareDiagrams$\\
    \textbf{Input:} two phase-free ZH diagrams $D_1$ and $D_2$.\\
    \textbf{Output:} $True$ if and only if $\diageval[D_1] = \diageval[D_2]$.
\end{quote}

One of the ways to check whether two diagrams represent the same quantum process is to attempt rewriting one to the other. Completeness of graphical calculi means, that if two diagrams represent the same process, one can be transformed to the other by a series of rewrites. For languages complete for quantum computation, the proof goes through exponential in size normal forms. It means that, concerning the current knowledge, the number of rewrites transforming one diagram to another not only can be exponential, but the diagrams on the way can be exponential as well. The best known upper bound for comparing diagrams is $\rm{coNP}^{\counting{P}}$ \cite{debeaudrapCircuitExtractionZXDiagrams2022c}.

The goal of the research is to establish a tight bound for comparing diagrams. The main motivations come from \cite{debeaudrapCircuitExtractionZXDiagrams2022c}, where an upper bound for comparing diagrams is used to bound circuit extraction from above with $\RM{NP}^{\npsharpp}$. Further, the hardness of comparing diagrams would extend to the comparison of circuits with measurement post-selection and thus, it would connect to $\RM{PostBQP}$ (Bounded-error Quantum Polynomial-time with Post-selection \cite{aaronsonQuantumComputingPostselection2005}) and $\RM{PP}$ (Probabilistic Polynomial-time \cite{gillComputationalComplexityProbabilistic1977}) computation schemes. Another motivation was exploring the encoding of $\counting{P}$-hard problems as ZH diagrams, and to look for interesting completeness results.

In this work, we make progress towards the above goal. We show that two closely related problems, arising in phase-free ZH calculus \cite{backensCompletenessZHcalculus2023}, are $\npsharpp$-complete. These are some of the first examples of $\npsharpp$-complete problems (other examples that do not directly relate to quantum can be found in \cite{toranComplexityClassesDefined1991, monniauxNPExistsPP2022}). The problems relate to comparing diagrams in the following way. Comparing diagrams asks that a certain property holds for all entries in the matrix representation, while the presented problems check whether there exists an entry satisfying the property.

The first of the problem mentioned above is:

\begin{quote}
    $\StateEq$\\
    \textbf{Input:} a pair of phase-free ZH diagrams $D_1, D_2$ with matching number of dangling edges.\\
    \textbf{Output:} $True$ if and only if there exists a computational basis state $\ket{v}$ such that $\diageval[D_1] \ket{v} = \diageval[D_2] \ket{v}$.
\end{quote}

By interpreting a diagram with dangling edges as a quantum state, this problem can be viewed as checking whether there exists a choice of measurement outcomes (on the computational basis) that appears with the same probability for both quantum states given as diagrams. Another way to understand this problem is by checking whether matrix interpretations of given diagrams are equal in some positions.

The second problem is as follows, for $k \in \mathbb{Z}[\frac{1}{2}]$ (dyadic rationals):

\begin{quote}
    $\ContainsEntryk$\\
    \textbf{Input:} a phase-free ZH diagram $D$\\
    \textbf{Output:} $True$ if and only if $k$ appears in the matrix interpretation of $D$.
\end{quote}

In particular, when $k = 0$, the problem can be interpreted as: given a diagram representing a quantum state, does there exist a measurement choice (on the computational basis) that happens with probability $0$? Our result also works when the number $k$ is part of the input, rather than a fixed number on which the problem depends. Our main contribution is proving the following:

\begin{theorem}\label{main result}
    $\StateEq$ and $\ContainsEntryk$ are both $\npsharpp$-complete.
\end{theorem}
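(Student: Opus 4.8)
The plan is to prove the two inclusions—membership in $\npsharpp$ and $\npsharpp$-hardness—separately, and to handle $\StateEq$ and $\ContainsEntryk$ uniformly by routing both through a common combinatorial problem built from $\cSAT$. Throughout I rely on the fact that the value of a \emph{scalar} phase-free ZH diagram is a dyadic rational computable in $\counting{P}$ (writable as a difference of two path-counts divided by a power of two), and conversely that $\cSAT$ encodes faithfully as a scalar diagram.

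For membership, the uniform idea is to guess the relevant computational basis state(s) nondeterministically and then verify a single numerical condition with one $\counting{P}$ query. For $\ContainsEntryk$ I would guess a row index $v$ and column index $w$, plug $\bra{v}$ and $\ket{w}$ into the open edges of $D$ to obtain a scalar diagram, and check that its value equals $k$; since both the value and $k$ lie in $\mathbb{Z}[\tfrac12]$, this comparison is one $\counting{P}$ (in fact $\cep$) query, so $\ContainsEntryk \in \npsharpp$. For $\StateEq$ the subtlety is that $\diageval[D_1]\ket{v} = \diageval[D_2]\ket{v}$ is an equality of two exponentially long vectors. I would avoid a universal quantifier over the component index by forming the squared norm of the difference: using that phase-free ZH represents sums of diagrams and admits adjoints, build a diagram for $(\diageval[D_1]-\diageval[D_2])\ket{v}$ and contract it against its conjugate to obtain $\sum_w \lvert \langle w \rvert (\diageval[D_1]-\diageval[D_2])\ket{v}\rvert^2$, a nonnegative dyadic that vanishes exactly when the two images coincide. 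A single $\counting{P}$ query tests whether this scalar is $0$, giving $\StateEq \in \npsharpp$.

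For hardness I would adapt Cook--Levin to a machine $N$ deciding an arbitrary $L \in \npsharpp$, after putting $N$ into a normal form making a single oracle query (a standard manipulation). The tableau construction produces a propositional formula whose satisfying assignments encode accepting computations of $N$ up to the query, while the oracle answer is realized as the number of satisfying assignments of an auxiliary formula $\psi$ (since $\cSAT$ is $\counting{P}$-complete). This yields the intermediate problem $\SATandComparecSAT$: decide whether there is a satisfying assignment $y$ of one formula whose induced auxiliary formula $\psi_y$ has a prescribed count of satisfying assignments. I would establish $\SATandComparecSAT$ as $\npsharpp$-hard here, then reduce it to the ZH problems: the $\cSAT$ count becomes a matrix entry of a scalar phase-free ZH diagram, the existential over witnesses $y$ becomes the existential over basis states $\ket{y}$, and the "count equals target" condition becomes the entry-equals-$k$ condition of $\ContainsEntryk$ directly, and the equalizing condition of $\StateEq$ after inserting the difference-and-norm gadget from the membership argument.

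The main obstacle I expect is the Cook--Levin adaptation at the oracle interface: collapsing the (possibly adaptive) oracle queries into the single-query normal form without changing the accepted language, and embedding the count condition into the tableau so that the resulting diagram's entry equals the target \emph{exactly} when $N$ accepts. This requires tracking the dyadic normalization carefully so that the encoded count lands on the prescribed value $k$, and arranging the witness-encoding legs so that the existential over basis states matches the existential over accepting computations with no spurious equalizing states. Once this faithful correspondence is in place, the two reductions and the membership arguments combine to give $\npsharpp$-completeness of both problems.
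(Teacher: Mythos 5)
Your overall architecture mirrors the paper's (membership by guessing a basis state and making one scalar-evaluation query; hardness via a single-query normal form, Cook--Levin, an intermediate problem $\SATandComparecSAT$, and the formula-to-diagram encoding), and the membership half is essentially correct --- though your norm-of-difference gadget for $\StateEq$ is unnecessary: $\ket{v}$ is plugged into \emph{all} dangling edges, so $\diageval[D_1]\ket{v}$ and $\diageval[D_2]\ket{v}$ are scalars and can be compared with direct oracle calls, which is exactly what the paper does. The genuine gaps are in the hardness half. First, you dismiss the collapse of $\npsharpp$ to a machine making a \emph{single} oracle query as ``a standard manipulation.'' It is not; it is the central technical contribution of the paper (the theorem $\npsharpp = \npcep$). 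An NP machine's later computation depends on earlier oracle answers, so queries cannot simply be deferred. The paper's machine instead nondeterministically \emph{guesses} every $\cSAT$ answer, converts each guessed number $k$ into a formula with exactly $k$ satisfying assignments (a dedicated lemma), combines all the ``guessed count $=$ true count'' checks into one pair of formulae via binary-concatenation lemmas, and verifies everything with a single $\cep$ query at the very end whose answer is returned verbatim. Note also that the comparison target depends on the guessed path, which is why the paper's intermediate problem asks for equality of \emph{two witness-dependent counts}, not for a ``prescribed count'' as in your formulation; making the target fixed would itself require the number-to-formula machinery you never introduce. Without articulating this guess-and-verify mechanism, your normal form has no justification and the rest of the reduction has nothing to stand on.

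Second, your claim that the ``count equals target'' condition becomes the entry-equals-$k$ condition of $\ContainsEntryk$ ``directly'' hides a real construction. The formula encoding of Lemma \ref{formulaattached} produces entries that are nonnegative integer counts varying with the basis state; to get an entry equal to the fixed number $k$ precisely when two path-dependent counts agree, the paper must realize a \emph{difference} of counts inside the diagram (a cap carrying a white and a dark NOT, with interpretation $\ket{01}-\ket{10}$), renormalize by $1/2^{m+1}$ using star generators, and pad the formulae ($\psi'$, $\rho'$) so the difference is offset by $k$; arbitrary dyadic $k$ needs a further rescaling gadget. Moreover, spurious basis states --- assignments not encoding valid accepting tableaux --- must be prevented from accidentally producing the value $k$; the paper enforces this with the $\Phi_{\mathcal{M},w} \wedge (\Psi_{\mathcal{M},w} \vee y)$ versus $\Rho_{\mathcal{M},w} \vee y'$ padding, which guarantees a strict inequality whenever the Cook--Levin formula is unsatisfied. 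You correctly flag both the normalization tracking and the spurious-state issue as obstacles, but flagging them is not resolving them: these constructions, together with the single-query collapse, are precisely where the proof's work lies.
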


We also give a construction showing $\counting{P}$-hardness of circuit extraction from phase-free ZH -- the original paper \cite{debeaudrapCircuitExtractionZXDiagrams2022c} showing $\counting{P}$-hardness of circuit extraction did not work for the phase-free variant -- the construction used in the proof required $\frac{\pi}{2}$ phase spider, which does not exist in phase-free ZH.

\begin{quote}
    $\CircuitExtraction$\\
    \textbf{Input:} a phase-free ZH diagram $D_1$ proportional to a unitary and a (finitely presented) set of quantum gates $\mathcal{G}$.\\
    \textbf{Output:} a polynomial in size of $D_1$ quantum circuit constructed with gates from $\mathcal{G}$ representing a process proportional to $\diageval[D_1]$ or a message that no such circuit exists.
\end{quote}

\begin{theorem}\label{circuit extraction hardness}
    $\CircuitExtraction$ is $\counting{P}$-hard.
\end{theorem}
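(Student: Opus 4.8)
The plan is to reduce from $\cSAT$---computing the number $\#\phi$ of satisfying assignments of a Boolean formula $\phi$ on $n$ variables---adapting the ZX construction of \cite{debeaudrapCircuitExtractionZXDiagrams2022c} to the phase-free setting. First I would build, using only phase-free ZH generators, a diagram $D_\phi$ that evaluates $\phi$: Z-spiders copy and fan out the inputs while H-boxes realise the Boolean connectives (conjunctions together with the requisite sign flips), so that feeding a computational basis state $\ket{x}$ into its $n$ input edges produces the scalar $1$ when $x \models \phi$ and $0$ otherwise. This is the standard route by which scalar evaluation of phase-free ZH diagrams is already $\counting{P}$-hard, and, importantly, it uses no phase labels at all.

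Next I would assemble $D_\phi$ into a diagram $D$ with two dangling edges whose interpretation is proportional to a single-qubit unitary and whose entries encode $\#\phi$. Summing the satisfaction bit over all $2^n$ assignments---achieved by capping every input edge with $\ket{+} \propto \ket{0} + \ket{1}$ and routing the output bit of $D_\phi$ onto one output wire---funnels the count into one amplitude, so that $\diageval[D]$ is proportional to a single-qubit matrix whose first column is $(\#\phi,\ 2^n-\#\phi)^{\top}$ up to normalisation. Any such unit vector extends to a genuine $2\times 2$ unitary, and the off-diagonal $-1$ entry needed to make the two columns orthogonal is produced for free by an H-box. This last point is exactly where the original ZX proof invoked a $\tfrac{\pi}{2}$-phase spider, which phase-free ZH lacks; replacing that phase gadget by an H-box-based construction, while retaining provable proportionality to a unitary, is the crux of the adaptation and the step I expect to be the main obstacle.

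Finally I would show that a polynomial-time circuit extractor decides $\cSAT$. Because $D$ acts on a single qubit, an ancilla-free circuit $C$ with $\diageval[C] \propto \diageval[D]$ is a product of polynomially many $2\times 2$ gate matrices drawn from $\mathcal{G}$, and these can be multiplied out exactly in polynomial time; reading off the ratio of the entries of $\diageval[C]$ then recovers $\#\phi : (2^n - \#\phi)$, and hence $\#\phi$ itself once $n$ is known. Thus a polynomial-time algorithm for $\CircuitExtraction$ would yield a polynomial-time computation of $\#\phi$, establishing $\counting{P}$-hardness. The two places needing care are the phase-free encoding just described and the efficient simulability of the extracted circuit: keeping the target a single-qubit (ancilla-free) unitary---the variant motivated in the introduction---is what guarantees that $\diageval[C]$ can be evaluated exactly in polynomial time, so I would state the reduction for that setting.
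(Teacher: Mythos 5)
Your proposal matches the paper's proof in essentially every respect: reduce from $\cSAT$, cap the inputs of $D_\phi$ with white spiders to funnel $\csat(\phi)$ and $2^n-\csat(\phi)$ into the amplitudes of a single-qubit map, complete it to something proportional to a unitary (the paper uses the not generators to realise $M = \begin{pmatrix} a_0 & a_1 \\ a_1 & -a_0 \end{pmatrix}$, i.e.\ a controlled-$iY$-type gadget, exactly the H-box/$-1$ completion you sketch), and then recover $\csat(\phi)$ by multiplying out the $2\times 2$ matrices of the extracted ancilla-free circuit in polynomial time. The step you flag as the main obstacle is resolved in the paper precisely as you anticipate, so your outline is correct and essentially identical to the paper's argument.
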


Our proofs work for phase-free ZH calculus. These results also generalise to other graphical calculi, like ZX, ZH and ZW. This is because the phase-free ZH diagrams can always be represented in those other graphical calculi, and the translation from phase-free ZH to other calculi is polynomial in size.

\subsection{Structure}
In the section \ref{coco}, we present the necessary definitions from the Computational Complexity. After that, in section \ref{phasefreezh}, we define the phase-free ZH calculus. The proof of theorem \ref{main result} follows in section \ref{npsharppsection}. In section \ref{circuitextractionsection}, we give the proof of theorem \ref{circuit extraction hardness}. In the last section \ref{lastsection}, we give conclusions and plans for further research.

\section{Computational Complexity}\label{coco}
We assume knowledge of Turing Machines and familiarity with standard classes such as $\RM{NP}$ and the concept of oracle classes. The names of problems are written in bold text, while the names of classes are in standard, non-italic text. TM stands for Turing Machine and ND for non-deterministic. $\paths(M,w)$ for a NDTM $M$ and its input $w \in \Sigma^*$ stands for the total number of paths $M$ runs on $w$. Similarly, $\apaths(M,w)$ and $\rpaths(M,w)$ stand for the total number of respectively accepting and rejecting paths $M$ has on $w$. $q_{ACC}$ and $q_{REJ}$ stand for the accepting and rejecting states of a TM respectively. Finally, $q_{ASK}$ stands for the state used to communicate with the oracle, i.e. when an oracle TM enters its $q_{ASK}$ state, then the oracle is called.

Some of the classes we work with are non-standard. For clarity, in the appendix \ref{cococlasses}, we present all of the complexity classes that we use or mention.

We are going to talk about boolean formulae and their valuations. Boolean formulae are constructed with variables, two logical values $True$ and $False$, and connectives $\wedge, \vee, \neg, \to, \leftrightarrow$. A valuation is a function mapping some variables to logical values $True$ and $False$. For instance, $(x_1 \vee x_2)$ under the valuation $(x_1, False), (x_2, True)$ is $(False \vee True) = True$, and under the valuation $(x_1, False)$ it is $(False \vee x_2) = x_2$.

Throughout, we use the following two problems, based on $\SAT$ problem, the canonical $\RM{NP}$-complete problem \cite{cookComplexityTheoremprovingProcedures1971}.

\begin{quote}
    $\cSAT$\\
    \textbf{Input:} a boolean formula $\phi$ and the variables $x_1,\dots,x_n$ on which $\phi$ is defined.\\
    \textbf{Output:} a number of satisfying assignments of $\phi$, i.e. number of $True/False$ assignments to $x_1,\dots,x_n$ for which $\phi$ evaluates to $True$.
\end{quote}

We will also use $\csat$ as the corresponding function mapping the formula and variables to the number of satisfying assignments. When the variables are explicit, for instance, all variables appearing in the formula and nothing else, we omit writing them. For instance: {{$\csat((x_1 \wedge x_2) \wedge (x_1 \wedge \neg x_3)) = 1$}} as the only satisfying assignment of $(x_1, x_2, x_3)$ is $(True, True, False)$ The $\csat$ problem is the canonical $\counting{P}$-complete problem \cite{valiantComplexityComputingPermanent1979}.

\begin{quote}
    $\ComparecSAT$\\
    \textbf{Input:} Two boolean formulae defined on $n$ variables each:\begin{itemize}
        \item $\phi$ on variables $X := x_1, \dots, x_n$,
        \item $\psi$ on variables $Y := y_1, \dots, y_n$.
    \end{itemize}
    \textbf{Output:} $True$ if and only if $\csat(\phi, X) = \csat(\psi, Y)$.
\end{quote}

Similarly, we omit writing variables when they are explicit. For example, the answer for instance consisting of formulae $((x_1 \wedge x_2) \wedge (x_1 \wedge \neg x_3)$ and $(y_1 \vee y_2 \vee y_3)$ is $False$, as $\csat((x_1 \wedge x_2) \wedge (x_1 \wedge \neg x_3)) = 1 \ne 7 = \csat(y_1 \vee y_2 \vee y_3)$. This problem is $\cep$-complete. While the proof follows immediately from the definitions of $\cep$ and $\counting{P}$, we present it below, as we could not find any publications with the proof (or this problem definition).

\begin{theorem}
    $\ComparecSAT$ is $\cep$-complete.
\end{theorem}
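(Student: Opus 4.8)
The plan is to show $\ComparecSAT \in \cep$ and $\cep$-hardness separately.

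First I would recall the definition of $\cep$: a language $L$ is in $\cep$ if there is a NDTM $M$ such that $w \in L$ iff $\apaths(M,w) = \rpaths(M,w)$ (the number of accepting and rejecting paths coincide). To place $\ComparecSAT$ in $\cep$, I would build a single NDTM $N$ that, on input $(\phi, X, \psi, Y)$, first nondeterministically branches into two equal-sized halves, then in the first half nondeterministically guesses an assignment to $X$ and accepts iff $\phi$ is satisfied, and in the second half guesses an assignment to $Y$ and \emph{rejects} iff $\psi$ is satisfied (swapping the roles of accept/reject in the second branch). By construction the accepting paths of $N$ are exactly the satisfying assignments of $\phi$ plus the non-satisfying assignments of $\psi$, and the rejecting paths are the complementary count; arranging the padding so both branches explore $2^n$ leaves, one gets $\apaths(N,w) - \rpaths(N,w)$ proportional to $\csat(\phi,X) - \csat(\psi,Y)$. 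Hence $\apaths = \rpaths$ exactly when the two $\csat$ values agree, placing the problem in $\cep$.

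For $\cep$-hardness, I would reduce an arbitrary $\cep$ language to $\ComparecSAT$. Given a $\cep$ machine $M$ and input $w$, membership is decided by whether $\apaths(M,w) = \rpaths(M,w)$. The standard Cook–Levin tableau construction turns the computation of $M$ on $w$ into a boolean formula whose satisfying assignments are in bijection with accepting paths; applying it to $M$ gives a formula $\phi$ with $\csat(\phi) = \apaths(M,w)$, and applying it to the machine $M'$ obtained by swapping $q_{ACC}$ and $q_{REJ}$ gives $\psi$ with $\csat(\psi) = \rpaths(M,w)$. I would pad the two formulae so they are defined on the same number of variables $n$ (multiplying each count by the same power of two, which does not affect equality), and output the pair $(\phi, \psi)$. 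Then $w \in L(M)$ iff $\csat(\phi) = \csat(\psi)$, which is exactly the $\ComparecSAT$ condition, and the whole reduction is polynomial-time.

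The main obstacle I expect is bookkeeping rather than conceptual depth: I must ensure the Cook–Levin encodings of the two machines produce formulae on matching variable sets (so the $\ComparecSAT$ input is well-formed) and that the path-count-to-satisfying-assignment correspondence is exact, including handling paths that neither accept nor reject if the machine is not assumed to halt in one of those two states on every branch. I would normalise $M$ up front so that every computation path terminates in either $q_{ACC}$ or $q_{REJ}$, so that $\paths(M,w) = \apaths(M,w) + \rpaths(M,w)$ and the complementation used to build $\psi$ is clean. Given this normalisation, both containment and hardness follow directly from the definitions, as the statement's preamble anticipates.
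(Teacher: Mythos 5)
Your proof is correct and takes essentially the same route as the paper's: the same two-branch NDTM for containment (accepting on satisfying assignments of $\phi$ and on \emph{non}-satisfying assignments of $\psi$), and the same hardness reduction via the parsimonious Cook--Levin construction applied both to $M$ and to $M$ with $q_{ACC}$ and $q_{REJ}$ exchanged. The only cosmetic differences are your padding scheme (free variables multiplying both counts by the same power of two, versus the paper's count-preserving conjoined clause of fresh variables) and your explicit normalisation that every path halts in $q_{ACC}$ or $q_{REJ}$, both of which are sound.
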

\begin{proof}
    \textbf{Containment:}\\
    Define a NDTM that given an input $(\phi,\psi)$, where the formulae are on $n$ variables each, runs $2^{n+1}$ paths. It uses $2^n$ paths to evaluate all possible assignments of $\phi$ and returns $True$ precisely for satisfying assignments of $\phi$. Similarly, on the other $2^n$ paths it evaluates and returns $True$ precisely for unsatisfying assignments of $\psi$. Thus, in total $M$ accepts input on $\csat(\phi) + \csat(\neg\psi) = \csat(\phi) + 2^n - \csat(\psi)$ paths and rejects on $2^n-\csat(\phi) + \csat(\psi)$ paths. The two numbers equal precisely when $\csat(\phi) = \csat(\psi)$, as required.

    \textbf{Hardness:}\\
    Given a problem $A \in \RM{C_=P}$, let $M$ be the corresponding NDTM. For any instance $w$ for $A$, we use the Cook-Levin method \cite{cookComplexityTheoremprovingProcedures1971} to construct a boolean formula $\phi$ corresponding to running $M$ on $w$. This construction is parsimonious (or at least it can be made parsimonious, check \cite{aroraComputationalComplexityModern2009}), thus $\csat(\phi) = \apaths(M,w)$. Similarly, we construct a boolean formula $\psi$ corresponding to running $M'$ on $w$, where $M'$ is $M$ with accepting and rejecting states flipped. Then, again by parsimony of the Cook-Levin method, $\csat(\psi) = \rpaths(M,w)$. We extend both formulae to $\phi', \psi'$ that operate on the same number of variables, without changing the number of satisfying assignments (for instance, by appending a clause $(y_1 \wedge y_2 \wedge \dots \wedge y_l)$ where $y_1, \dots, y_l$ are newly introduced variables). Finally, the instance $w$ of $A$ is equivalent to $(\psi', \phi')$ instance of $\ComparecSAT$, as $\csat(\phi') = \csat(\psi') \Leftrightarrow \apaths(M,w) = \rpaths(M,w)$.
\end{proof}

The trouble of proving $\npsharpp$-hardness of $\StateEq$ and $\ContainsEntryk$ comes from the lack of a simple $\npsharpp$-complete problem that can be used in reductions. Crafting such a problem is the most difficult part and we will dedicate it most of the proof section.

\section{Phase-free ZH Calculus}\label{phasefreezh}
ZH calculus is a graphical language for quantum computation. In this work, we use phase-free ZH \cite{backensCompletenessZHcalculus2023}. The language consists of string diagrams and rules describing how to modify the diagrams.

The string diagrams represent quantum processes given by matrices over $\doubleZ[\frac{1}{2}]$ (dyadic rationals), corresponding to the Toffoli+Hadamard approximately universal gate set.

The diagrams can be composed sequentially or parallelly, corresponding to the composition or the tensor product. The generators (including derived generators) of phase-free ZH, together with matrix interpretations, are presented in figure \ref{zh_generators}. Double square brackets stand for matrix interpretation. Note that the star generator is just a scalar. The matrix interpretation of a box contains ones everywhere except for the bottom right corner, where it instead has a $-1$.

\begin{figure}
\centering
\begin{align*}
\left \llbracket \tikzfig{white_spider} \right \rrbracket &= \ket{0}^{\otimes n}\bra{0}^{\otimes m} + \ket{1}^{\otimes n}\bra{1}^{\otimes m}\\
\left \llbracket \tikzfig{dark_spider} \right \rrbracket &= \sqrt{2}(\ket{+}^{\otimes n}\bra{+}^{\otimes m} + \ket{-}^{\otimes n}\bra{-}^{\otimes m})\\
\left \llbracket \tikzfig{white_not_spider} \right \rrbracket &= \ket{0}^{\otimes n}\bra{0}^{\otimes m} - \ket{1}^{\otimes n}\bra{1}^{\otimes m}\\
\left \llbracket \tikzfig{dark_not_spider} \right \rrbracket &= \sqrt{2}(\ket{+}^{\otimes n}\bra{+}^{\otimes m} - \ket{-}^{\otimes n}\bra{-}^{\otimes m})\\
\left \llbracket \tikzfig{box} \right \rrbracket &= \sum_{i_1, \dots, i_m, j_1, \dots, j_n \in \{ 0, 1 \}} (-1)^{i_1 \dots i_m j_1 \dots j_n} \ket{j_1 \dots j_n}\bra{i_1 \dots i_m}\\
\left \llbracket \tikzfig{star_gen} \right \rrbracket &= \frac{1}{2}
\end{align*}
\caption{Generators of phase-free ZH calculus. We refer to these as \textbf{white spider}, \textbf{dark spider}, \textbf{white not}, \textbf{dark not}, \textbf{box} and \textbf{star} respectively.}
\label{zh_generators}
\end{figure}

\subsection{Boolean formulae in graphical calculi}
The two logical values $True$ and $False$ from logic can be represented as $\ket{1}$ and $\ket{0}$ respectively, both logical values at the same time by $\ket{0} + \ket{1}$, and the check of whether the value is $True$ by $\bra{1}$, see figure \ref{logical_values}. The representation of basic logic gates in phase-free ZH on such values is presented in figure \ref{basic_gates}. For correctness, see \cite[Subsection 2.3]{backensCompletenessZHcalculus2023}. From these, we can construct arbitrary logic operators -- for instance to represent OR gate we can note that $\phi \vee \psi \Leftrightarrow \neg (\neg \phi \wedge \neg \psi)$.

As an example, the construction for the formula {{$(x_1 \wedge x_2) \wedge (x_1 \wedge \neg x_3)$}} is presented in the figure \ref{fig:formula example}.

\begin{figure}[H]
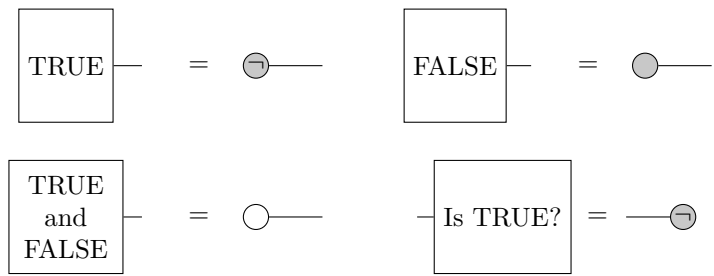

\centering
\tikzfig{true}\quad
\tikzfig{false}$ $\\$ $\\$ $\\
\tikzfig{truefalse}\quad
\tikzfig{is1}
\caption{Basic logic values.}
\label{logical_values}
\end{figure}

\begin{figure}[H]
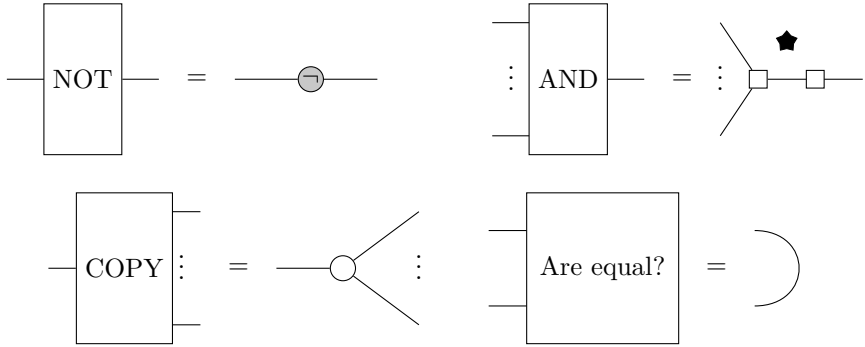

\centering
\tikzfig{not_gate}
\tikzfig{and_gate}$ $\\$ $\\$ $\\
\tikzfig{copy_gate}
\tikzfig{areequal}
\caption{Basic logic gates.}
\label{basic_gates}
\end{figure}

\begin{figure}[H]
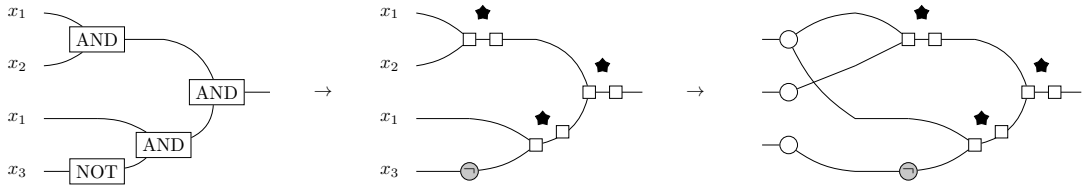

    \centering
    \scalebox{0.7}{\tikzfig{formula_to_ZH}}
    \caption{{{$(x_1 \wedge x_2) \wedge (x_1 \wedge \neg x_3)$}} in phase-free ZH}
    \label{fig:formula example}
\end{figure}

From now on, given a boolean formula $\phi$ on some variables $x_1, \dots, x_n$, its diagram representation is denoted $D_{\phi}$. $D_{\phi}$ has $n$ left dangling edges corresponding to the variables and one right dangling edge, corresponding to the valuation of $\phi$. Attaching $\ket{0}$s or $\ket{1}$s to the left dangling edges, results in the evaluation of the formula for some given assignment. However, we can also efficiently represent the number of satisfying assignments of a formula. By attaching white spiders to edges corresponding to the variables, we evaluate the diagram on the state $\sqrt{2}^n \ket{+\dots+}$, which equals the superposition of all states from computational basis, i.e. we evaluate the formula on all possible assignments simultaneously, as captured in the following lemma.

\begin{lemma}\label{formulaattached}
    Let $\phi$ be a boolean formula on variables $x_1, \dots, x_n$ and $D_{\phi}$ be the phase-free ZH encoding of $\phi$. The diagram obtained by attaching white spiders to the dangling edges corresponding to the variables has matrix representation $\csat(\phi)\ket{1} + (2^n - \csat(\phi))\ket{0}$.
\end{lemma}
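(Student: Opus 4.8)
The plan is to split the claim into three pieces and combine them by linearity of the matrix interpretation. First, I would check that attaching a white spider with one output and no inputs to a wire produces the vector $\ket{0}+\ket{1}$: this is exactly the $n=1$, $m=0$ case of the white-spider interpretation in figure \ref{zh_generators}. Tensoring one such spider onto each of the $n$ variable edges therefore feeds $D_\phi$ the input $(\ket{0}+\ket{1})^{\otimes n}=\sum_{v\in\{0,1\}^n}\ket{v}$, which is the $\sqrt{2}^n\ket{+\cdots+}$ mentioned in the text. Second, I would argue that on every computational basis input $\ket{v}$ the diagram computes the truth value of $\phi$ exactly, i.e. $\diageval[D_\phi]\ket{v}=\ket{\phi(v)}$ with $\phi(v)\in\{0,1\}$. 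Granting these, linearity gives
\[
\diageval[D_\phi]\Bigl(\sum_{v}\ket{v}\Bigr)=\sum_{v}\ket{\phi(v)}=\csat(\phi)\,\ket{1}+\bigl(2^n-\csat(\phi)\bigr)\ket{0},
\]
since exactly $\csat(\phi)$ of the $2^n$ assignments satisfy $\phi$ and the remainder falsify it.

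The real content is the basis-state correctness $\diageval[D_\phi]\ket{v}=\ket{\phi(v)}$, which I would prove using that the interpretation is functorial: it sends sequential composition of diagrams to the matrix product and parallel composition to the tensor product. I would first isolate the layer of white (COPY) spiders that fans each variable edge out to its several occurrences in $\phi$; on the basis state $\ket{v}$ this layer outputs the basis state assigning to each occurrence the value of its variable, because the COPY gadget of figure \ref{basic_gates} duplicates $\ket{0}$ and $\ket{1}$ faithfully. It then remains to run an induction over the tree of connectives built from these occurrences, invoking the correctness of the NOT and AND gadgets of figure \ref{basic_gates} (justified in \cite[Subsection 2.3]{backensCompletenessZHcalculus2023}) together with the derivation of the remaining connectives such as OR via $\phi\vee\psi\Leftrightarrow\neg(\neg\phi\wedge\neg\psi)$; composing the gadget for each connective with its inductively correct subdiagrams yields $\ket{\phi(v)}$.

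The main obstacle I anticipate is scalar bookkeeping. For the coefficients in the final expression to be precisely $\csat(\phi)$ and $2^n-\csat(\phi)$, each gadget must realise its boolean function on basis states with coefficient exactly one, and the fan-out layer must introduce no prefactor either; any stray $\sqrt{2}$ or $\tfrac12$ contributed by a box or spider would rescale the counts. I would therefore verify that the gadgets in figure \ref{basic_gates} are normalised so that functorial composition preserves this exactness all the way up the formula, ensuring that no scalar accumulates before the superposed input is applied.
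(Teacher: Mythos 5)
Your proposal is correct and follows essentially the same route as the paper: expand the white-spider inputs into $\sum_{v}\ket{v}$, use basis-state correctness $\diageval[D_\phi]\ket{v}=\ket{v(\phi)}$, and count satisfying versus falsifying assignments by linearity. The only difference is one of emphasis: the paper takes the basis-state correctness of the encoding (and its exact, scalar-free normalisation) as given, citing \cite[Subsection 2.3]{backensCompletenessZHcalculus2023}, whereas you sketch the inductive argument over the connective tree that the citation covers.
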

\begin{proof}
    We have:
    \begin{align*}
        \diageval[D_{\phi}] (\sqrt{2}^n\ket{+\dots+}) &=
        \sum_{v_1,\dots,v_n \in \{ 0,1\}} \diageval[D_{\phi}] \ket{v_1 \dots v_n} \\&=
        \sum_{v_1,\dots,v_n \in \{0,1\}} \ket{v(\phi)} \\&=
        \csat(\phi)\ket{1} + (2^n - \csat(\phi))\ket{0}
    \end{align*}
    where $v$ stands for the valuation of $x_1, \dots, x_n$ mapping $x_i$ to $True$ when $v_i = 1$ and to $False$ when $v_i = 0$.
\end{proof}

\begin{example}
    Consider the previous formula {$\phi := (x_1 \wedge x_2) \wedge (x_1 \wedge \neg x_3)$}. By the above lemma and the fact that $\csat(\phi) = 1$, we get the following:
    \begin{equation*}
        \diageval[\scalebox{0.77}{\tikzfig{formula_to_ZH_attached}}] = \ket{1} + 7\ket{0}
    \end{equation*}
\end{example}

Given a boolean formula $\phi$, we can use the above construction to represent $\csat(\phi)$ in phase-free ZH without computing it explicitly. Similar constructions can be found for example in \cite{debeaudrapTensorNetworkRewriting2021, debeaudrapCircuitExtractionZXDiagrams2022c, laakkonenGraphicalSATAlgorithm2022, laakkonenPicturingCountingReductions2023}. The construction also relates to the category approach from \cite{comfortZXCalculusComplete2021}.

\section{\texorpdfstring{$\npsharpp$}{NP\#P}-completeness}\label{npsharppsection}

In this section, we prove theorem \ref{main result}. The proof has four essential parts, each covered in a separate subsection. First, we show that $\StateEq$ and $\ContainsEntryk$ both are in $\npsharpp$. The proof of hardness is more involved and is spread over three subsections. In the second subsection, we show that $\npsharpp = \npcep$. $1$ in brackets means that the oracle is called precisely once during the computation. In the third subsection, we craft an artificial $\npcep$-complete problem. In the final subsection, we reduce such a problem to both $\StateEq$ and $\ContainsEntryk$.

\subsection{Containment}
We need the following problem:

\begin{quote}
    $\ScalarDiagram$\\
    \textbf{Input:} a phase-free ZH diagram $D$ with no dangling edges (i.e. a scalar diagram).\\
    \textbf{Output:} the number $\diageval[D]$.
\end{quote}

It is known that this problem is in $\RM{FP}^{\counting{P}}$ \cite{laakkonenPicturingCountingReductions2023}. A version that works for some tensor networks other than only phase-free ZH follows from the Holant framework \cite[page 212]{caiComplexityDichotomiesCounting2017} and \cite[Lemma 54]{backensFullDichotomyHolant2021}. Yet another explanation was given in \cite{debeaudrapCircuitExtractionZXDiagrams2022c}, where authors used the method introduced in \cite{adlemanQuantumComputability1997} to obtain a similar bound for circuits rather than a diagram, though as the authors point out the method works for diagrams as well. By pushing the deterministic polynomial time operations to instead be performed by the NDTM: $\RM{NP}^{\RM{FP}^{\counting{P}}} \subseteq \npsharpp$. Hence, for the containment of $\StateEq$ and $\ContainsEntryk$ in $\npsharpp$, it suffices to show containment in $\RM{NP}^{\ScalarDiagram}$.

\begin{theorem}\label{cont stateeq}
    The problem $\StateEq$ is in $\npsharpp$.
\end{theorem}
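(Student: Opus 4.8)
The plan is to show $\StateEq \in \RM{NP}^{\ScalarDiagram}$, which by the preceding discussion ($\RM{NP}^{\RM{FP}^{\counting{P}}} \subseteq \npsharpp$) suffices. Recall the input is a pair of diagrams $D_1, D_2$ with the same number of dangling edges, say $N$, and we must decide whether there exists a computational basis state $\ket{v}$ with $\diageval[D_1]\ket{v} = \diageval[D_2]\ket{v}$.

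First I would have the NDTM nondeterministically guess the basis state $\ket{v} \in \{0,1\}^N$; this is a string of $N$ bits, so the guess has polynomial length and lies well within $\RM{NP}$'s budget. The remaining task, for a fixed guessed $v$, is to verify the equality $\diageval[D_1]\ket{v} = \diageval[D_2]\ket{v}$ of two vectors. The key idea is that each such equality can be reduced to evaluating scalar diagrams: fixing the input legs to the basis state $\ket{v}$ amounts to plugging $\ket{0}$ or $\ket{1}$ generators onto the corresponding dangling edges of $D_1$ and $D_2$, which is a polynomial-size diagram modification. What remains is a comparison of two vectors $\diageval[D_1]\ket{v}$ and $\diageval[D_2]\ket{v}$, each of which could be exponentially long, so I cannot simply read them off entry by entry.

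The main obstacle is therefore turning a vector equality into a bounded number of scalar queries. The clean way to handle this is to observe that $\diageval[D_1]\ket{v} = \diageval[D_2]\ket{v}$ iff the squared norm of their difference is zero, i.e. iff
\begin{equation*}
    \left(\bra{v}\diageval[D_1]^\dagger - \bra{v}\diageval[D_2]^\dagger\right)\left(\diageval[D_1]\ket{v} - \diageval[D_2]\ket{v}\right) = 0.
\end{equation*}
Expanding this inner product gives a sum of four terms of the form $\bra{v}\diageval[D_i]^\dagger\diageval[D_j]\ket{v}$, and each such term is itself the evaluation of a scalar (dangling-edge-free) diagram: one stacks $D_j$ with the conjugate-transpose diagram of $D_i$ (obtained by flipping $D_i$ and reflecting each generator, which for phase-free ZH generators is again a phase-free ZH diagram of the same size, since all matrix entries are real), caps all input legs with $\ket{v}$, and connects the two free ends. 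Each of these four scalar diagrams has polynomial size, so the NDTM makes four calls to the $\ScalarDiagram$ oracle, sums the returned dyadic rationals with polynomially many bits each, and accepts on this path iff the total is $0$. Since all entries are real dyadic rationals, the squared-norm expression is a nonnegative rational that vanishes exactly when the two vectors coincide, so the existential guess together with this verification decides $\StateEq$ exactly.

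Thus $\StateEq \in \RM{NP}^{\ScalarDiagram} \subseteq \RM{NP}^{\RM{FP}^{\counting{P}}} \subseteq \npsharpp$, as claimed. I would flag that the only subtlety worth spelling out is the explicit construction of the conjugate-transpose diagram and the verification that it stays within phase-free ZH and polynomial size; everything else is routine diagram surgery plus a constant number of oracle calls.
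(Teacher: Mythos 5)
Your proof is correct, and its skeleton --- nondeterministically guess the basis state, then verify with the $\ScalarDiagram$ oracle and invoke $\RM{NP}^{\RM{FP}^{\counting{P}}} \subseteq \npsharpp$ --- matches the paper's. Where you diverge is the verification step. The paper reads $\diageval[D_1]\ket{v}$ and $\diageval[D_2]\ket{v}$ as \emph{scalars}: the guessed state is plugged into all dangling edges (this is also how the paper's hardness proof treats the problem, where the diagrams are effects, i.e.\ row vectors), so the machine just calls the oracle on each of the two plugged diagrams and accepts iff the two returned dyadic rationals coincide --- two oracle calls and a direct comparison. You instead worry that $\diageval[D_i]\ket{v}$ could be an exponentially long vector and resolve this with the squared-norm trick, expanding $\bigl(\bra{v}\diageval[D_1]^\dagger - \bra{v}\diageval[D_2]^\dagger\bigr)\bigl(\diageval[D_1]\ket{v} - \diageval[D_2]\ket{v}\bigr)$ into four scalar-diagram evaluations built from adjoint diagrams; this is sound, since all phase-free ZH generators are real and symmetric under swapping inputs and outputs, so adjoints stay in the calculus at the same size, and a real sum of squares vanishes exactly when the vectors agree. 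Your route buys genuine extra generality: it establishes containment even under the reading of $\StateEq$ in which the diagrams are maps with open output legs and $\ket{v}$ is plugged into the inputs only. But note an internal inconsistency in your write-up: you guess $\ket{v} \in \{0,1\}^N$ over \emph{all} $N$ dangling edges, and in that case the plugged diagrams are already scalar diagrams, the ``vectors'' are just numbers, and your norm machinery collapses to the paper's two-call direct comparison; the exponentially-long-vector scenario only arises if some edges are left open, in which case $v$ should range over the input edges only.
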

\begin{proof}
    We construct a polytime NDTM $M$ with $\ScalarDiagram$ oracle that takes input $(D_1, D_2)$ and non-deterministically chooses a computational basis state $\ket{v}$. Next, using the oracle, $M$ computes both $\diageval[D_1]\ket{v}$ and $\diageval[D_2]\ket{v}$. Finally, $M$ accepts if the two values are equal.
\end{proof}

For the second problem, we proceed similarly.

\begin{theorem}\label{cont containsentry}
    The problem $\ContainsEntryk$ is in $\npsharpp$.
\end{theorem}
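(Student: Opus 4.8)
The plan is to mirror the containment proof just given for $\StateEq$, exploiting that $\ContainsEntryk$ asks an existential question over the entries of a matrix, each of which can be computed by the $\ScalarDiagram$ oracle. As observed in the preceding discussion, it suffices to place the problem in $\RM{NP}^{\ScalarDiagram}$, since $\ScalarDiagram \in \RM{FP}^{\counting{P}}$ and $\RM{NP}^{\RM{FP}^{\counting{P}}} \subseteq \npsharpp$.

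First I would describe the NDTM $M$ with $\ScalarDiagram$ oracle. Given input $D$ with, say, $n$ left and $m$ right dangling edges, a single entry of the matrix interpretation is indexed by a pair of computational basis states $\ket{v}$ (over the $n$ inputs) and $\ket{w}$ (over the $m$ outputs). The machine $M$ non-deterministically guesses such a pair $(v,w)$, which takes only $n+m$ bits and hence polynomial time. It then forms the scalar diagram obtained by plugging the basis effects and states onto the dangling edges of $D$ -- concretely, capping each left edge with the appropriate $\ket{0}$ or $\ket{1}$ and each right edge with the corresponding $\bra{0}$ or $\bra{1}$ -- producing a diagram with no dangling edges whose evaluation is exactly the $(w,v)$ entry of $\diageval[D]$. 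This capping is a polynomial-size local modification, so the resulting scalar diagram is built in polynomial time.

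Next, $M$ queries the $\ScalarDiagram$ oracle once on this scalar diagram to obtain the entry's value, and accepts if and only if that value equals $k$. The equality test is a comparison of two dyadic rationals and is performed in deterministic polynomial time. By construction, some computational path accepts precisely when there is an index pair whose matrix entry equals $k$, i.e. precisely when $k$ appears in the matrix interpretation of $D$. Hence $\ContainsEntryk \in \RM{NP}^{\ScalarDiagram} \subseteq \npsharpp$.

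I anticipate no serious obstacle here: the argument is a routine adaptation of the $\StateEq$ containment. The only points warranting a sentence of care are that plugging in fixed basis states yields a genuine scalar diagram within the phase-free ZH generator set (so the oracle is applicable) and that the guessed index and the oracle's output value have polynomially bounded size, so the whole computation stays polytime with a single-bit accept decision. When $k$ is supplied as part of the input rather than fixed, exactly the same machine works, since the final comparison against $k$ is unaffected.
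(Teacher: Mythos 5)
Your proposal is correct and follows essentially the same route as the paper: nondeterministically guess a basis index, evaluate the corresponding entry with a single $\ScalarDiagram$ oracle call, and accept iff it equals $k$, concluding via $\RM{NP}^{\RM{FP}^{\counting{P}}} \subseteq \npsharpp$. Your version is in fact slightly more careful than the paper's one-line proof, since you explicitly handle diagrams with both input and output dangling edges by capping them to obtain a genuine scalar diagram.
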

\begin{proof}
    We construct a polytime NDTM $M$ with $\ScalarDiagram$ oracle that takes input $D$ and non-deterministically chooses a computational basis state $\ket{v}$. Next, using the oracle, $M$ computes $\diageval[D]\ket{v}$. Finally, $M$ accepts if the obtained value equals $k$.
\end{proof}

\subsection{Hardness -- oracle change}
To simplify proofs of $\npsharpp$-hardness of $\StateEq$ and $\ContainsEntryk$, we use the following fact.

\begin{theorem}\label{npsharppisnpcep}
    $\npsharpp = \npcep$. Further, the equality holds if the answer to the single call to the $\cep$ oracle is returned as an answer.
\end{theorem}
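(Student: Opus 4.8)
The plan is to prove the two equalities $\npsharpp = \npcep$ by establishing containment in both directions, exploiting the close relationship between $\counting{P}$ and $\cep$. Recall that $\cep$ is the class of problems decidable by a NDTM that accepts exactly when $\apaths = \rpaths$, so a $\cep$ oracle essentially tests a single equality between two $\counting{P}$ functions. The key observation is that one query to a $\cep$ oracle is morally the comparison of two $\counting{P}$ values, and conversely a $\counting{P}$ value can be extracted through polynomially many (or even a single cleverly arranged) comparisons.

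First I would show $\npcep \subseteq \npsharpp$. Given a polytime NDTM $N$ with a $\cep$ oracle making a single query, I would simulate $N$ with a machine that has a $\counting{P}$ oracle. When $N$ reaches $q_{ASK}$ with a query string $w$ for its $\cep$ oracle, that oracle corresponds to some NDTM $M$, and the answer is $True$ iff $\apaths(M,w) = \rpaths(M,w)$, equivalently iff $\apaths(M,w) - \rpaths(M,w) = 0$. Since $\apaths(M,w)$ and $\rpaths(M,w)$ are each computable by a $\counting{P}$ oracle call (or their difference can be encoded as a single gap-style quantity using standard padding so that the target value is a fixed constant), the $\npsharpp$ machine can recover enough information from one $\counting{P}$ query to determine the $\cep$ answer and then continue $N$'s computation. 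The final clause of the theorem — that the single $\cep$ answer may be returned directly as the overall answer — is preserved, because here we only use the $\counting{P}$ result to evaluate that one comparison.

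Next I would show $\npsharpp \subseteq \npcep$, which is the more delicate direction and the main obstacle. The difficulty is that a $\counting{P}$ oracle returns an actual integer, whereas a $\cep$ oracle returns only a single bit testing an equality; moreover the theorem demands that the whole computation use the $\cep$ oracle exactly once, with its bit as the final output. The plan is to absorb the entire NDTM-plus-counting structure into one large NDTM whose accepting and rejecting path counts can be compared. Concretely, I would fold the nondeterministic choices of the $\npsharpp$ machine together with the branching needed to evaluate the $\counting{P}$ query into one combined NDTM $M'$, arranging by the Cook-Levin / padding techniques already used above that $M'$ accepts on $w$ iff the original $\npsharpp$ machine accepts, and that this acceptance is expressible as an equality $\apaths(M',w) = \rpaths(M',w)$. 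The subtlety is that the outer $\RM{NP}$ existential quantifier must be threaded through correctly so that ``there exists an accepting branch'' survives as the single $\cep$ test; I expect to handle this by using the self-reducibility of $\counting{P}$ and the fact that checking a $\counting{P}$ value against a computed threshold reduces to one $\cep$-style comparison, with the remaining nondeterminism carried out inside the final NDTM rather than by the outer machine.

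Finally I would confirm the addendum about returning the oracle answer. In both inclusions the construction is arranged so that the decision reduces to a single comparison of two path-counts, so the $\cep$ oracle is invoked once and its Boolean verdict is precisely the answer of the whole machine; no post-processing of the oracle bit is needed. I expect the hardest technical point to be the direction $\npsharpp \subseteq \npcep$, specifically ensuring that a genuine integer-valued $\counting{P}$ query collapses into a single equality test while keeping the outer existential intact and the oracle call count at exactly one.
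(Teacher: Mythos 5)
Your first inclusion ($\npcep \subseteq \npsharpp$) is fine and matches the paper: answer the single $\cep$ query by computing the relevant path counts with $\counting{P}$ calls and comparing. The genuine gap is in the hard direction $\npsharpp \subseteq \npcep$. Your plan is to fold the outer machine's nondeterminism, together with the branching that evaluates the $\counting{P}$ queries, into one combined NDTM whose accepting and rejecting path counts are compared by the single $\cep$ query, ``with the remaining nondeterminism carried out inside the final NDTM rather than by the outer machine.'' This cannot work as stated. A $\cep$ oracle reports whether two \emph{aggregate} path counts are exactly equal, and an existential quantifier does not commute with such an equality: ``there exists a branch on which two counts agree'' is not the same statement as ``the totals, summed over all branches, agree.'' Taken literally -- an effectively deterministic outer machine returning the bit of one big $\cep$ test -- your construction would prove $\npsharpp \subseteq \cep$, a collapse of $\npsharpp$ into $\cep \subseteq \RM{PP}$, which is far stronger than the theorem and certainly not obtainable by Cook--Levin/padding manipulations. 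Relatedly, ``checking a $\counting{P}$ value against a computed threshold'' is a $\RM{PP}$-style inequality, not a $\cep$-style equality, so that step has no obvious realization either. Finally, you treat ``the $\counting{P}$ query'' as if there were one per run; an $\npsharpp$ machine may make polynomially many \emph{adaptive} queries along each branch, and without their integer answers the machine cannot even write down the later queries.

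The idea you are missing, and the one the paper uses, keeps all nondeterminism in the \emph{outer} $\RM{NP}$ machine: whenever the simulated machine would query $\csat(\phi)$ with $\phi$ on $n$ variables, the outer machine nondeterministically \emph{guesses} the answer $k \in [0..2^n]$, constructs in polytime a formula $\psi$ with $\csat(\psi) = k$ (the paper's Lemma \ref{number to formula}), records the pair $(\phi,\psi)$ on a side tape, and continues the simulation as if the oracle had answered $k$; adaptivity is then unproblematic, since along each guess sequence every subsequent query is well defined. At the very end, all recorded equality checks are concatenated into a single $\ComparecSAT$ instance (the paper's Lemmas \ref{concat formulae} and \ref{concat ccsat}, which pack two counts into disjoint blocks of binary digits), and that one query is asked, with its answer returned -- this also yields the ``Further'' clause. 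Branches containing any wrong guess automatically reject because the final check fails, while a branch with all guesses correct faithfully simulates the original machine; so acceptance is preserved even though the oracle is consulted only once. This deferred-verification mechanism is precisely what lets a single terminal $\cep$ bit certify an entire history of integer-valued oracle answers while the existential quantifier stays outside the oracle.
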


The idea is as follows: instead of calling $\csat$ oracle, a NDTM can non-deterministically choose the answer. Then, at the end of the computation, it can verify all answer choices with a single call to the $\ComparecSAT$ oracle.

To prove theorem \ref{npsharppisnpcep} formally, we need a few additional lemmas.

\begin{lemma}[Concat formulae]\label{concat formulae}
    Given two boolean formulae $\phi$ and $\psi$, it is possible to deterministically construct a boolean formula $\rho$, such that:
    \begin{itemize}
        \item $\rho$ can be constructed in time polynomial in the sizes of $\phi$ and $\psi$,
        \item $\csat(\rho)$ written in binary is a concatenation of $\csat(\phi)$ and $\csat(\psi)$ written in binary, possibly with some additional leading zeros.
    \end{itemize}
\end{lemma}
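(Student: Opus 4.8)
The plan is to realize the binary concatenation of $\csat(\phi)$ and $\csat(\psi)$ by building a single formula whose satisfying assignments split into two disjoint families: one family that contributes the high-order bits (coming from $\psi$, shifted up) and one that contributes the low-order bits (coming from $\phi$). Concretely, suppose $\csat(\psi)$ fits in $b$ bits, i.e. $\csat(\psi) < 2^b$; such a $b$ can be read off immediately since $\psi$ has, say, $m$ variables, so $\csat(\psi) \le 2^m$ and we may take $b = m+1$ (or $b=m$ with care about the all-satisfied case). The target value is then $\csat(\phi) + 2^b \cdot \csat(\psi)$, whose binary representation is exactly the concatenation of the two counts, padded with leading zeros inside each field so that the low field occupies $b$ bits.

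First I would introduce a fresh control variable $c$ and a block of $b$ fresh ``padding'' variables $z_1,\dots,z_b$, keeping the original variable sets $X$ (for $\phi$) and $Y$ (for $\psi$) disjoint. I would then set
\begin{equation*}
  \rho \;:=\; \bigl(\neg c \wedge \phi \wedge (z_1 \wedge \dots \wedge z_b \leftrightarrow \bot)^{\text{forced to one pattern}}\bigr) \;\vee\; \bigl(c \wedge \psi \wedge (\text{$z_1,\dots,z_b$ free})\bigr),
\end{equation*}
where the intended effect is the following. On the branch $c = \mathit{False}$, the formula reduces to $\phi$ while the padding variables $z_1,\dots,z_b$ are pinned to a single fixed value (so they contribute a factor of $1$), and each satisfying assignment of $\phi$ is counted once; this gives a contribution of $\csat(\phi)$. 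On the branch $c = \mathit{True}$, the formula reduces to $\psi$ but now the $b$ padding variables range freely, multiplying the count by $2^b$; this gives a contribution of $2^b \cdot \csat(\psi)$. Since the two branches are mutually exclusive via $c$, the total is $\csat(\rho) = \csat(\phi) + 2^b \cdot \csat(\psi)$, as desired. The clean way to pin the $z_i$'s on the first branch is to conjoin $(z_1 \wedge \dots \wedge z_b)$ there, and to let them appear unconstrained on the second branch.

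The main obstacle is bookkeeping about variable counts and field widths, not the logical core. I need all three problems to agree on their variable sets so that the counting over ``all assignments'' is well-defined: the formula $\rho$ lives over $X \cup Y \cup \{c, z_1,\dots,z_b\}$, and I must verify that variables in $Y$ do not inadvertently contribute spurious factors on the $\phi$-branch (and vice versa). The standard fix is to observe that an unconstrained variable on a branch multiplies that branch's count by $2$, so I must either force every ``foreign'' variable to a fixed value on each branch, or else pad both counts symmetrically and absorb the common factor; I would choose to pin each foreign variable via conjunction so that the $\phi$-branch sees $Y$ fixed and the $\psi$-branch sees $X$ fixed. The polynomial-time constructibility is then immediate, since $\rho$ is assembled from $\phi$, $\psi$, and $O(b) = O(m)$ extra gadgets using the gate constructions already available, and the leading-zeros clause in the statement is exactly the padding coming from choosing $b$ large enough that $\csat(\psi) < 2^b$.
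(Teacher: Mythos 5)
Your overall strategy is the same as the paper's: a fresh control variable splits the satisfying assignments into two mutually exclusive families, foreign variables are pinned by conjunction on the branch where they would otherwise roam, and a block of variables left free on one branch multiplies that branch's count by a power of two, yielding a total of the form $\csat(\phi) + 2^b\cdot\csat(\psi)$. The paper's version, $\rho = (\phi \wedge z) \vee (x_1 \wedge \dots \wedge x_n \wedge \psi \wedge \neg z \wedge z')$, is slightly more economical in that it reuses the variables of $\psi$ (plus one extra variable) as the free padding on the $\phi$-branch instead of introducing fresh $z_1,\dots,z_b$, but that is an implementation detail. Note also that your construction produces the concatenation in the order $\csat(\psi)$-then-$\csat(\phi)$, the reverse of the lemma statement; swapping the roles of the two formulae repairs this.

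There is, however, a genuine error in your choice of the field width $b$: you require $\csat(\psi) < 2^b$ and take $b = m+1$ where $m$ is the number of variables of $\psi$. But in your arithmetic $\csat(\psi)$ occupies the \emph{high-order} field --- it is shifted up by $2^b$ and may be arbitrarily large without doing any harm. The value that must fit into $b$ bits is the \emph{low-order} one, $\csat(\phi)$: the sum $\csat(\phi) + 2^b\cdot\csat(\psi)$ is a concatenation of the two binary strings only if $\csat(\phi) < 2^b$; otherwise the low field overflows into the high field. Concretely, take $\phi = x_1 \vee x_2 \vee x_3$ and $\psi = y_1$: then $m = 1$, $b = 2$, and $\csat(\rho) = 7 + 4\cdot 1 = 11$, which in binary is $1011$ --- not a concatenation of $1$ and $111$. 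The fix is to key $b$ to the formula sitting in the low field, i.e.\ take $b = n+1$ where $n$ is the number of variables of $\phi$, so that $\csat(\phi) \le 2^n < 2^b$; this is exactly what the paper does when it shifts by $2^{m+1}$, since in its arithmetic the low field holds $\csat(\psi) \le 2^m$. With that correction (and the order swap), your argument goes through.
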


\begin{proof}
    Let $x_1,\dots,x_n$ be variables of $\phi$ and $y_1,\dots,y_m$ be variables of $\psi$. By substituting variables in one formula with new ones, we can assume that $x_i \ne y_j$ for all $i,j$. We define $\rho$ as a boolean formula on $x_1,\dots,x_n,y_1,\dots,y_m,z,z'$ as follows:
    \begin{equation*}
        \rho = (\phi \wedge z) \vee (x_1 \wedge \dots \wedge x_n \wedge \psi \wedge \neg z \wedge z')
    \end{equation*}
    Let:
    \begin{align*}
        X &= x_1,\dots,x_n\\
        Y &= y_1,\dots,y_m\\
        Z &= z,z'\\
        A &= X,Y,Z
    \end{align*}
    Then, we have:
    \begin{align*}
        \csat(\rho, A) &= \csat(\phi \wedge z, A) + \csat(x_1 \wedge \dots \wedge x_n \wedge \psi \wedge \neg z \wedge z', A) \\&= \csat(\phi,X)\cdot 2^{m+1} + \csat(\psi,Y)
    \end{align*}
    The first equality holds as $\rho$ is a conjunction of two formulae that cannot be simultaneously satisfied due to the requirement of different valuations of $z$. The second equality follows from the fact, that $\phi \wedge z$ is independent of valuations of $Y,z'$. Thus, the last $m+1$ digits of $\csat(\rho,A)$ in binary represent $\csat(\psi,Y)$ possibly with leading zeros, and the other first digits of $\csat(\rho,A)$ represent $\csat(\phi,X)$.
\end{proof}

\begin{lemma}[Concat $\ComparecSAT$ instances]\label{concat ccsat}
    Let $(\phi, \psi)$ and $(\zeta, \xi)$ be two instances for problem $\ComparecSAT$. They can be combined to a single instance for which the answer is $True$ if and only if the answers are $True$ for the two given instances.
\end{lemma}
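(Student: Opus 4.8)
The plan is to apply the preceding lemma (Lemma~\ref{concat formulae}) twice and then exploit the carry-free nature of the binary concatenation it produces. Write the two given instances as $(\phi,\psi)$ and $(\zeta,\xi)$; by the definition of $\ComparecSAT$ the formulae $\phi,\psi$ share a common number of variables $n$, and $\zeta,\xi$ share a common number of variables $k$. First I would use Lemma~\ref{concat formulae} to build a formula $\alpha$ by concatenating $(\phi,\zeta)$ and a formula $\beta$ by concatenating $(\psi,\xi)$, and then claim that $(\alpha,\beta)$ is the desired combined $\ComparecSAT$ instance.

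Concretely, because $\zeta$ and $\xi$ each have $k$ variables, the concatenation lemma yields $\csat(\alpha) = \csat(\phi)\cdot 2^{k+1} + \csat(\zeta)$ and $\csat(\beta) = \csat(\psi)\cdot 2^{k+1} + \csat(\xi)$, with both low-order summands lying in $[0,2^k] \subset [0,2^{k+1})$. The key step is a no-carry argument: the equation $\csat(\alpha) = \csat(\beta)$ rearranges to $\big(\csat(\phi)-\csat(\psi)\big)\cdot 2^{k+1} = \csat(\xi)-\csat(\zeta)$, whose right-hand side has absolute value at most $2^k < 2^{k+1}$ while its left-hand side is an integer multiple of $2^{k+1}$, so both sides must vanish. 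Hence $\csat(\alpha) = \csat(\beta)$ holds exactly when $\csat(\phi) = \csat(\psi)$ and $\csat(\zeta) = \csat(\xi)$ hold simultaneously, which is precisely the conjunction of the two original answers.

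It remains to verify that $(\alpha,\beta)$ is a syntactically legal $\ComparecSAT$ instance, i.e. that $\alpha$ and $\beta$ are defined on the same number of variables. This is pure bookkeeping: Lemma~\ref{concat formulae} introduces the variables of both input formulae together with two fresh auxiliary variables, so $\alpha$ has $n+k+2$ variables and $\beta$ likewise has $n+k+2$ variables. Polynomial running time is inherited directly from the two invocations of Lemma~\ref{concat formulae}. The one point requiring care, which I expect to be the main (though mild) obstacle, is ensuring that the lower block in both concatenations has the \emph{identical} width $k+1$, so that the no-carry bound applies uniformly to both sides; this is guaranteed precisely because $\zeta$ and $\xi$, being the two halves of a single $\ComparecSAT$ instance, have equally many variables. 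A naive concatenation with mismatched padding could let a low-order borrow masquerade as equality, so this alignment of bit-widths is exactly what the argument must pin down.
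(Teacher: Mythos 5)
Your proposal is correct and follows essentially the same route as the paper's proof: apply Lemma~\ref{concat formulae} to the pairs $(\phi,\zeta)$ and $(\psi,\xi)$, then use the fact that the low-order blocks are bounded by $2^{k}<2^{k+1}$ so that equality of the concatenated counts forces equality of both blocks separately. Your explicit no-carry argument and the bookkeeping on variable counts merely spell out details the paper leaves implicit.
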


\begin{proof}
    Let $\rho$ be a formula constructed from $\phi$ and $\zeta$ in the previous lemma and $\tau$ be a formula constructed from $\psi$ and $\xi$. Then the instance $(\rho,\tau)$ can be constructed in polytime and by construction: $\csat(\rho) = \csat(\phi) \cdot 2^{m+1} + \csat(\zeta)$ and $\csat(\tau) = \csat(\psi) \cdot 2^{m+1} + \csat(\xi)$, where $m$ is the number of variables of $\zeta$ (and, thus, also of $\xi$). Since $\csat(\zeta)$ and $\csat(\xi)$ are both bounded above by $2^m$, we have:
    \begin{equation*}
        \csat(\rho) = \csat(\tau) \Leftrightarrow (\csat(\xi) = \csat(\zeta)) \wedge (\csat(\phi) = \csat(\psi))
    \end{equation*}
    as required.
\end{proof}

Another necessary construction is the creation of a boolean formula with a given number of satisfying assignments.

\begin{lemma}\label{number to formula}
    Let $x_1,\dots,x_n$ be variables and $k$ a number in $[0..2^n]$. Then, it is possible to construct a boolean formula $\psi$ on $x_1,\dots,x_n$ with $\csat(\psi, (x_1,\dots,x_n)) = k$.
\end{lemma}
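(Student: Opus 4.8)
The plan is to encode $k$ in binary and build a comparator formula that is satisfied by exactly the first $k$ of the $2^n$ possible assignments. Identify each assignment $(v_1,\dots,v_n) \in \{0,1\}^n$ with the integer $\mathrm{val}(v) = \sum_{i=1}^n v_i 2^{n-i}$, so that assignments biject with $\{0,1,\dots,2^n-1\}$. For $0 < k < 2^n$ the formula $\psi$ will be satisfied precisely by those assignments with $\mathrm{val}(v) < k$; since exactly $k$ of the integers in $\{0,\dots,2^n-1\}$ are below $k$, this gives $\csat(\psi,(x_1,\dots,x_n)) = k$. The two extreme values are handled directly: for $k = 0$ take $\psi$ to be any unsatisfiable formula such as $x_1 \wedge \neg x_1$, and for $k = 2^n$ take $\psi$ to be any tautology such as $x_1 \vee \neg x_1$.

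The key step is to express the predicate $\mathrm{val}(v) < k$ as a formula. Writing $k$ in binary as $k = \sum_{i=1}^n k_i 2^{n-i}$ with each $k_i \in \{0,1\}$, the inequality $\mathrm{val}(v) < k$ holds iff there is a most-significant position $j$ at which $v$ first drops below $k$, i.e. $v$ agrees with $k$ on all earlier bits while $k_j = 1$ and $v_j = 0$. This yields
\begin{equation*}
  \psi \;=\; \bigvee_{j \,:\, k_j = 1} \left( \neg x_j \wedge \bigwedge_{i < j} \ell_i \right),
\end{equation*}
where $\ell_i = x_i$ if $k_i = 1$ and $\ell_i = \neg x_i$ if $k_i = 0$, so that $\ell_i$ forces $x_i$ to match the $i$-th bit of $k$ (and the inner conjunction is vacuously true when $j = 1$). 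This formula has size $O(n^2)$ and is plainly constructible in time polynomial in $n$, which is what the reductions in the following subsection require.

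It remains to verify the count, which is the only place needing care. The disjuncts are pairwise mutually exclusive: for $j < j'$, the $j$-th disjunct forces $x_j = 0$ whereas the $j'$-th forces $x_j$ to equal $k_j = 1$, so no assignment satisfies both. The disjunct indexed by $j$ fixes $x_1,\dots,x_{j-1}$ to the corresponding bits of $k$ and fixes $x_j = 0$, while leaving $x_{j+1},\dots,x_n$ free, hence it is satisfied by exactly $2^{n-j}$ assignments. Summing over the positions with $k_j = 1$ gives $\sum_{j \,:\, k_j = 1} 2^{n-j} = k$, which is precisely the binary expansion of $k$. I do not expect a genuine obstacle here, as the construction is standard; the only things to get right are the bit-indexing convention and the two boundary values $k = 0$ and $k = 2^n$.
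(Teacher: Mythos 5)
Your proof is correct, and it takes a genuinely different route from the paper's. You build an explicit mutually-exclusive DNF comparator whose disjuncts pick out the assignments with value strictly below $k$, and you count directly: each disjunct $j$ contributes $2^{n-j}$ assignments and the contributions sum to the binary expansion of $k$. The paper instead nests the binary digits into a single formula $\neg l(a_0) \to (x_1 \wedgevee_{a_1} (x_2 \wedgevee_{a_2} \cdots (x_n \wedge l(a_n))\cdots))$, where the connective at depth $i$ is $\vee$ or $\wedge$ according to the $i$-th digit, and proves $\csat(\psi) = k$ by induction on $n$; semantically this is the mirror image of your construction (it is satisfied by the top-$k$ assignments rather than the bottom-$k$). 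What each buys: your version has a transparent semantic invariant ($\mathrm{val}(v) < k$) and needs no induction, at the cost of $O(n^2)$ formula size; the paper's formula is linear-size and its inductive step is one line, but the characterization of which assignments satisfy it is implicit. Both are polynomial-time constructible, which is all the surrounding reduction needs. One trivial loose end on your side: in the degenerate case $n = 0$ (so $k \in \{0,1\}$) your boundary formulas $x_1 \wedge \neg x_1$ and $x_1 \vee \neg x_1$ mention a variable that does not exist; since the paper's formula grammar includes the constants $True$ and $False$, you can simply take $\psi$ to be the appropriate constant there, just as the paper's proof handles $n = 0$ as a separate case.
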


\begin{proof}[Proof of lemma \ref{number to formula}]
    Let $a_0, \dots, a_{n}$ be the digits of $k$ written in binary, possibly with leading zeros. Thus, we have $k = \sum_{i \in [0..n]} 2^{n-i} \cdot a_i = \overline{(a_0 \dots a_n)}_2$. Now, define the following formula:
    \begin{align*}
        \psi := \neg l(a_0) \to (x_1 \wedgevee_{a_1} (x_2 \wedgevee_{a_2} \dots (x_{n-1} \wedgevee_{a_{n-1}} (x_n \wedge l(a_n)))\dots))
    \end{align*}
    where $\wedgevee_0 = \wedge$ and $\wedgevee_1 = \vee$, and $l(0) = False$ and $l(1) = True$.
    We prove that $\csat(\rho) = k$.

    When $n = 0$, then the formula simplifies to $\neg l(a_0) \to l(a_0)$, which is equivalent to $True \to False = False$ for $a_0 = 0$ and $False \to True = True$ for $a_0 = 1$. So $\csat(\psi) = a_0 = k$.

    When $a_0 = 1$, then $k \ge 2^n$. Since $k \in [0..2^n]$, we must have $k = 2^n$. On the other hand, when $a_0 = 1$, then $\rho$ simplifies to a tautology $False \to (\dots)$, and thus $\csat(\psi) = 2^n = k$, as required.

    The remaining cases have $n \ge 1$ and $a_0 = 0$ (so $k < 2^n$). Thus, we can ignore the ``$\neg l(a_0) \to$'' part of the formula. For these cases, we proceed by induction on $n$.

    When $n = 1$, then $\psi$ simplifies to $x_1 \wedge l(a_1)$, which has $a_1 = \overline{(0a_1)}_2 = k$ satisfiable assignments.

    For the induction step, we assume that the thesis holds up to some $n-1$, and we prove it also holds for $n$. Let $\tau = x_2 \wedgevee_{a_2} (x_3 \dots \wedgevee_{a_{n-1}} (x_k \wedge l(a_n))\dots )$ be the part of the formula after $x_1 \wedgevee_{a_1}$. Thus, by induction hypothesis $\csat(\tau, (x_2,\dots,x_n)) = \overline{(a_2 \dots a_n)}_2$.
    
    When $a_1 = 0$, then $\psi = x_1 \wedge \tau$ and $\csat(\psi, (x_1,\dots,x_n)) = \csat(\tau, (x_2,\dots,x_n)) = \overline{(a_2 \dots a_n)}_2 = \overline{(0a_2\dots a_n)}_2 = k$.
    
    When $a_1 = 1$, then $\psi = x_1 \vee \tau$. Since $x_1$ does not appear in $\tau$, we have $\csat(\psi, (x_1,\dots,x_n)) = 2^{n-1} + \csat(\tau, (x_2,\dots,x_n)) = 2^{n-1} + \overline{(a_2\dots a_n)}_2 = \overline{(1a_2\dots a_n)}_2 = k$, which ends the proof.
\end{proof}

\begin{example}
All possible formulae when using $3$ bits for the number of satisfying assignments (i.e. integers from $0$ to $4$):
\begin{center}
\begin{tabular}{| c | c c c | c | c | c |}
\hline
$k$ & $a_0$ & $a_1$ & $a_2$ & $\neg l(a_0) \to (x_1 \wedgevee_{a_1} (x_2 \wedge l(a_2)))$ & Simplified formula & $x_1, x_2$ assignments \\
\hline
$0$ & $0$ & $0$ & $0$ & $True \to (x_1 \wedge x_2 \wedge False)$ & $False$ & None \\
$1$ & $0$ & $0$ & $1$ & $True \to (x_1 \wedge x_2 \wedge True)$ & $x_1 \wedge x_2$ & $11$ \\
$2$ & $0$ & $1$ & $0$ & $True \to (x_1 \vee (x_2 \wedge False))$ & $x_1$ & $10, 11$ \\
$3$ & $0$ & $1$ & $1$ & $True \to (x_1 \vee (x_2 \wedge True))$ & $x_1 \vee x_2$ & $01, 10, 11$ \\
$4$ & $1$ & $0$ & $0$ & $False \to (x_1 \wedge x_2 \wedge False)$ & $True$ & $00, 01, 10, 11$ \\
\hline
\end{tabular}
\end{center}
\end{example}

Finally, we can prove theorem \ref{npsharppisnpcep}.

\begin{proof}[Proof of theorem \ref{npsharppisnpcep}]
    We start by proving the right containment $\npsharpp \subseteq \npcep$.

    $\cSAT$ is complete for class $\counting{P}$, thus $\npsharpp = \RM{NP}^{\cSAT}$. Let $A$ be a problem in $\RM{NP}^{\cSAT}$. It suffices to show that $A \in \npcep$. Let $M$ be a polytime NDTM with access to the $\cSAT$ oracle that recognizes $A$. Define NDTM $M'$ as follows. $M'$ contains one extra tape over $M$ and for all inputs $w$, the transition function of $M'$ matches that of $M$ (ignoring the extra tape of $M'$) for all configurations except for the three states: the oracle ask $q_{ASK}$, the accepting state $q_{ACC}$ and the rejecting state $q_{REJ}$. Further, $M'$ has access $\ComparecSAT$ oracle rather than $\cSAT$ oracle.

    When $M$ enters the oracle query state $q_{ASK}$ and sends $\phi, (x_1,\dots,x_n)$ to the $\cSAT$ oracle, then $M'$ instead non-deterministically chooses an answer $k\in[0..2^n]$ to the oracle query. Next, it creates a boolean formula $\psi$ on $x_1,\dots,x_n$ such that $\csat(\psi) = k$ by using the lemma \ref{number to formula}. Then, $M'$ writes the $\ComparecSAT$ instance $(\phi,\psi)$ on its extra tape. Finally, $M$ moves to the configuration that $M$ would be in if it received the answer $k$ from the oracle (except for the extra tape).

    When $M$ enters the accepting state $q_{ASK}$, then $M'$ instead constructs two boolean formulae $(\Psi,\Phi)$ such that $\csat(\Psi) = \csat(\Phi)$ if and only if $\csat(\phi) = \csat(\psi)$ for all pairs $(\phi,\psi)$ stored in the extra tape of $M'$. Finally, $M'$ sends $(\Phi,\Psi)$ to its oracle and returns the oracle's answer. Note, that the construction of $(\Phi,\Psi)$ is always possible via a series of concatenations presented in lemma \ref{concat ccsat}.

    When $M$ enters the rejecting state $q_{REJ}$, $M'$ calls its oracle with a $False$ instance and returns its answer.

    The NDTM $M'$ runs in a polytime since $M$ does, and all extra operations of $M'$ can be performed in time polynomial in the input size. Further, $M$ recognizes the same language. To see this, consider $w \in A = \RM{L}(M)$. Then, $M$ accepts $w$ in one of its nondeterministic paths $p$. Then in one of its nondeterministic paths, $M'$ picks the correct answer to the first oracle query that $M$ asks for on the path $p$. In that path, $M'$ further can non-deterministically choose a path that also picks the correct answer to the second oracle query $M$ asks for on the path $p$, then the third, etc. In the end, $M$ enters $q_{ACC}$, so $M'$ constructs a single instance of $\ComparecSAT$ that is used to verify that all nondeterministically chosen answers to the $M$ oracle queries were correct, so, in the end, $M'$ accepts $w$. When $M$ does not accept $w$, then on all its paths $M$ eventually enters the rejecting state. Thus $M'$ either ends with an unsatisfiable instance for the query (when $M'$ enters equivalent of $q_{REJ}$ state of $M$), or it had to choose an incorrect answer for at least one of the queries, which is captured in the end with the $\ComparecSAT$ oracle.

    The left containment $\npsharpp \supseteq \npcep$ is immediate. Instead of sending $\ComparecSAT$ query $(\phi, \psi)$, a Turing Machine can ask for $\csat(\phi)$ and $\csat(\psi)$, and then compare the answers to effectively simulate $\ComparecSAT$ oracle with two calls to the $\cSAT$ oracle.
\end{proof}

We could not find a shorter proof of the above fact. However, it is worth mentioning how a shorter proof could proceed. By corollary of Toda's theorem \cite{todaPPHardPolynomialTime1991}: $\RM{P}^{\counting{P}} = \RM{P}^{\RM{PP}}$ and hence $\npsharpp = \RM{NP}^{\RM{PP}}$. Then it may be possible to adjust Tor\'an's work \cite[Corollary 3.13 and Theorem 4.1]{toranComplexityClassesDefined1991} to obtain $\RM{NP}^{\RM{PP}} = \RM{NP}^{\RM{C_=P}}$. Finally, $\RM{NP}^{\RM{C_=P}} = \npcep$ by, for instance, paring functions. We would skip the Turing Machines here, however, they are necessary in the remaining parts of the proof anyway.

Thanks to the above theorem, instead of showing $\npsharpp$-hardness, we can show $\npcep$-hardness.

\subsection{Hardness -- crafting complete problem}

We want to show $\npsharpp$-hardness. By theorem \ref{npsharppisnpcep}, it is equivalent to showing $\npcep$-hardness. There are no natural $\npcep$-complete problems suitable for a reduction, so we craft an artificial $\npcep$-complete problem $\SATandComparecSAT$.

\begin{quote}
    $\SATandComparecSAT$\\
    \textbf{Input:} Natural numbers $n, m$ and two boolean formulae:\begin{itemize}
        \item $\psi$, defined on $x_1, \dots, x_n, y_1, \dots, y_m$,
        \item $\rho$, defined on $x_1, \dots, x_n, z_1, \dots, z_m$
    \end{itemize}
    \textbf{Output:} $True$ if and only if there exists valuation $v \colon \{ x_1, \dots, x_n \} \to \{ True, False \}$ such that:
    \begin{equation*}
        \csat(v(\psi), y_1,\dots,y_m) = \csat(v(\rho), z_1,\dots,z_m)
    \end{equation*}
\end{quote}

Note, that in the problem definition, $v(\psi)$ and $v(\rho)$ are formulae on variables $y_1, \dots, y_m$ and $z_1, \dots, z_m$ respectively.

While the definition of the problem is lengthy, we could argue that it is the most natural problem for $\npcep$ -- it combines $\SAT$ and $\ComparecSAT$ so the natural problems for $\RM{NP}$ and $\RM{C_=P}$.

The other existing complete problems for $\npsharpp$ \cite{toranComplexityClassesDefined1991, monniauxNPExistsPP2022} are based on checking which answer to two $\cSAT$ instances is greater, or how to maximize an answer, rather than asking for exact equality. Comparison of which value is greater likely cannot be directly represented in phase-free ZH.

\begin{example}\label{satccsatexample}
    Let $n=1, m=2$ and:\begin{itemize}
        \item $\psi = x_1 \vee y_1 \vee \neg y_2$,
        \item $\rho = \neg (z_1 \wedge (z_2 \vee x_1))$.
    \end{itemize}
    Then, under valuation $v$ mapping $x_1$ to $False$, we have:
    \begin{align*}
        v(\psi) &= False \vee y_1 \vee \neg y_2 = y_1 \vee \neg y_2\\
        v(\rho) &= \neg(z_1 \wedge (z_2 \vee False)) = \neg(z_1 \wedge z_2) = \neg z_1 \vee \neg z_2
    \end{align*}
    Since:
    \begin{gather*}
        \csat(v(\psi)) = \csat(y_1 \vee \neg y_2) = 3 = \csat(\neg z_1 \vee \neg z_2) = \csat(v(\rho))
    \end{gather*}
    the answer to such an instance of $\SATandComparecSAT$ is $True$. For valuation $v'$ mapping $x_1$ to $True$, the two values do not equalize:
    \begin{align*}
        v'(\psi) &= True \vee y_1 \vee \neg y_2 = True\\
        v'(\rho) &= \neg(z_1 \wedge (z_2 \vee True)) = \neg(z_1 \wedge True) = \neg z_1
    \end{align*}
    and:
    \begin{align*}
        \csat(v'(\phi), y_1,y_2) = \csat(True, y_1,y_2) = 4 \ne 2 = \csat((\neg z_1), z_1,z_2) = \csat(v'(\rho), z_1,z_2).
    \end{align*}
\end{example}

The rest of this subsection is dedicated to showing the $\npcep$-completeness of the above problem. The proof involves typical transformations of Turing machines. The next part of the proof is in the subsection \ref{zhencodingsection}.

We put a plethora of constraints on TMs for problems in $\npcep$. One of these constraints describes how a TM should represent a boolean formula used in the query. We want TMs to work solely with $0$ and $1$ symbols. The exact representation of the boolean formula in binary is not essential -- we only require it to exist and it must be possible to deterministically transform a sequence of zeros and ones back into a formula.

\begin{definition}[CNF]
    \sloppy A boolean formula on $x_1, \dots, x_n$ is in \textbf{conjunctive normal form} (\textbf{CNF}) if it is a conjunction of clauses, where each clause is a disjunction of some of the literals $x_1, \neg x_1, \dots, x_n, \neg x_n$.
\end{definition}

\begin{definition}[$0-1$ encoding of CNFs]
    Let $\phi = c_1 \wedge c_2 \wedge \dots \wedge c_m$ be a boolean formula in CNF on variables $x_1, \dots, x_n$. We define $0-1$ encoding of $\phi$ as a word $e_{\phi}$, defined as follows, where $k = \max(m,n)$:
    \begin{itemize}
        \item Extend set of allowed variables by $k-n$ variables $x_{n+1}, \dots, x_k$,
        \item Extend $\phi$ with $k-m$ clauses containing literals $x_1$ and $\neg x_1$,
        \item Extend each clause with literals $x_{n+1}, \dots, x_k$
        \item Translate each clause $c_j$ of $\phi$ to $2k$ symbols from $\{ 0,1 \}$, where symbols at the positions $2i-1$ and $2i$ correspond to variable $x_i$ for $i \in [1..k]$ as follows:
        \begin{itemize}
            \item The first symbol is $0$ when $c_j$ does not contain literal $x_i$ and $1$ otherwise,
            \item The second symbol is $0$ when $c_j$ does not contain literal $\neg x_i$ and $1$ otherwise.
        \end{itemize}
    \end{itemize}
\end{definition}

\begin{example}
    The CNF formula $x_1 \wedge (x_2 \vee \neg x_3)$ is translated to the following word $e_{x_1 \wedge (x_2 \vee \neg x_3)}$ (spaces are added for visibility only):
    \begin{equation*}
        10\ 00\ 00\ \ 00\ 10\ 01\ \ 11\ 00\ 00
    \end{equation*}
\end{example}

\begin{definition}
    Let $w$ be any word over $\{0,1\}$ with $2k^2$ symbols for some $k$. We define formula of $w$ as $\textsc{Formula}(w)$ defined as follows:
    \begin{itemize}
        \item $\textsc{Formula}(w)$ has $k$ clauses and is defined on $k$ variables $x_1, \dots, x_k$,
        \item Each clause corresponds to a block of $2k$ symbols, where the next $2$ symbols correspond to literals describing variable $x_i$, same as in the definition of $0-1$ encoding:
        \begin{itemize}
            \item If the first symbol is $1$, add literal $x_i$, otherwise add literal $False$,
            \item If the second symbol is $1$, add literal $\neg x_i$, otherwise add literal $False$.
        \end{itemize}
    \end{itemize}
\end{definition}

\begin{example}
    Consider the previous word:
    \begin{equation*}
        10\ 00\ 00\ \ 00\ 10\ 01\ \ 11\ 00\ 00
    \end{equation*}
    It corresponds to the formula below, where $0$ is used as shorthand for $False$:
    \begin{equation*}
        (x_1 \vee 0 \vee 0 \vee 0 \vee 0 \vee 0) \wedge (x_2 \vee 0 \vee 0 \vee 0 \vee 0 \vee \neg x_3) \wedge (x_1 \vee \neg x_1 \vee 0 \vee 0 \vee 0 \vee 0)
    \end{equation*}
    By removing $False$ literals from clauses, we get:
    \begin{equation*}
        x_1 \wedge (x_2 \vee \neg x_3) \wedge (x_1 \vee \neg x_1)
    \end{equation*}
    By removing the last clause which is satisfied by any assignment, we get the original formula:
    \begin{equation*}
        x_1 \wedge (x_2 \vee \neg x_3)
    \end{equation*}
\end{example}

\begin{theorem}\label{01encworks}
    Let $\Phi$ be a boolean formula in CNF with $m$ clauses, and let $x_1,\dots,x_n$ be variables on which it is considered. Then:
    \begin{equation*}
        \csat(\Phi, (x_1,\dots,x_n)) = \csat(\textsc{Formula}(e_{\Phi}), (x_1,\dots,x_{\max(n,m)}))
    \end{equation*}
\end{theorem}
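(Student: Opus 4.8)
The plan is to unfold both constructions and show that the decoded formula $\Psi := \textsc{Formula}(e_\Phi)$, read as a predicate on assignments to $x_1,\dots,x_k$ with $k=\max(m,n)$, has exactly the same satisfying assignments as $\Phi$ has on $x_1,\dots,x_n$ — from which equality of the counts is immediate. First I would trace the encode-then-decode round trip at the level of a single clause block: the translation step writes a $1$ at position $2i-1$ (resp.\ $2i$) of the block precisely when the clause contains the literal $x_i$ (resp.\ $\neg x_i$), and $\textsc{Formula}$ reads these symbols back as $x_i$ (resp.\ $\neg x_i$), inserting a $False$ literal wherever the symbol is $0$. Hence $\textsc{Formula}(e_\Phi)$ is, clause by clause, the disjunction of exactly the literals originally present, padded by $False$ literals for the absent variables.

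Next I would discharge the two count-neutral simplifications, following the structure of the worked example. A $False$ literal is the identity for $\vee$, so it may be deleted from every clause without changing which assignments satisfy that clause. The $k-m$ padding clauses are each of the form $x_1 \vee \neg x_1$, a tautology, and deleting a tautological conjunct alters neither the satisfying set nor its cardinality. After both removals the surviving clauses are precisely those of $\Phi$, so over a fixed variable set the two formulas are logically equivalent, and in particular $\csat$ agrees whenever the variable sets coincide. This already settles the sub-case $n \ge m$, where $k=n$, no fresh variables are introduced, and the padding is purely by tautological clauses.

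The remaining and genuinely delicate step is the treatment of the padding variables $x_{n+1},\dots,x_k$, i.e.\ the case $m>n$ with $k=m$, where the count is taken over $k$ rather than $n$ variables. I expect this to be the main obstacle: a careless handling would change the count, since $k-n$ unconstrained fresh variables multiply $\csat$ by $2^{k-n}$, while adding them as ordinary literals to each clause instead enlarges the satisfying set by an additive $2^{k}-2^{n}$. The crux is therefore to verify that the prescribed clause extension leaves the count invariant — concretely, that it pins each new variable to a single forced value so that every satisfying assignment of $\Phi$ lifts to a \emph{unique} assignment of $x_1,\dots,x_k$ and the correspondence is a bijection. I would handle this by computing, block by block, the contribution of the new-variable positions and confirming that the only assignments compatible with the added literals are those in which $x_{n+1},\dots,x_k$ take their forced values; this is where I would concentrate the care of the argument, and where a refinement of the padding may be needed to guarantee count neutrality.
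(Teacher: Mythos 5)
Your decomposition is sound, and for the case $n \ge m$ it is exactly the paper's own argument: deleting $False$ literals and tautological padding clauses is count-neutral, so the decoded formula is equivalent to $\Phi$ over the same variable set. The real value of your write-up is that you refuse to wave your hands at the case $m > n$, and your arithmetic there is exactly right. Under the only reading the definitions support (clauses are disjunctions, so ``extend each clause with literals $x_{n+1},\dots,x_k$'' means $c_j \mapsto c_j \vee x_{n+1} \vee \dots \vee x_k$, and $\textsc{Formula}$ faithfully decodes every marked literal back into the disjunction), one gets
\begin{equation*}
\csat(\textsc{Formula}(e_\Phi),(x_1,\dots,x_k)) = \csat(\Phi,(x_1,\dots,x_n)) + 2^k - 2^n,
\end{equation*}
because any assignment setting some fresh variable to $True$ satisfies every clause outright. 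So the ``delicate step'' you isolated cannot be discharged by any amount of care: the statement is false for $m>n$ with the definitions as written. Concretely, $\Phi = (x_1)\wedge(\neg x_1)$ has $m=2$, $n=1$, $\csat(\Phi,(x_1)) = 0$, while $\textsc{Formula}(e_\Phi) = (x_1 \vee x_2)\wedge(\neg x_1 \vee x_2)$ has two satisfying assignments over $(x_1,x_2)$.

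For comparison, the paper's proof passes over precisely this point with an incorrect assertion: its second bullet claims that after the extension ``the only satisfiable assignments are those with $x_{n+1},\dots,x_k$ set to $True$ and $x_1,\dots,x_n$ set to a satisfiable assignment of the original $\Phi$.'' That would be the situation if the fresh variables were pinned, e.g.\ by adjoining unit clauses $(x_{n+1}),\dots,(x_k)$, but it is false for the disjunctive extension the encoding actually performs. Nor is the repair free: adjoining $k-n$ unit clauses produces $m+(k-n)>k$ clauses and destroys the square ``$k$ clauses on $k$ variables'' shape that the $2k^2$-symbol encoding and the decoder $\textsc{Formula}$ rely on, and re-padding never catches up, since each fresh variable needs a clause of its own to pin it, so the clause-minus-variable deficit $m-n$ persists. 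Your closing instinct that ``a refinement of the padding may be needed'' is therefore the real content here: it is the encoding definition, not the proof, that must be changed, and since Theorems \ref{tmbs} and \ref{nice repr} invoke this same padding, any fix has to be propagated there as well. In short, your proposal is as complete as an honest proof of the stated theorem can be, and the step you could not close is a genuine error in the paper rather than a gap in your argument.
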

\begin{proof}
    Let $k = \max(m,n)$. Consider steps in the construction of $e_{\Phi}$:
    \begin{itemize}
        \item Extension of $\Phi$ by the clauses containing literals $x_1$ and $\neg x_1$ does not change satisfiable assignments -- any assignment satisfies such clauses,
        \item \sloppy Extension of set of variables to $x_1, \dots, x_k$, while extending each clause with literals $x_{n+1},\dots,x_k$ does not change number of satisfiable assignments -- the only satisfiable assignments after the extension are those with $x_{n+1},\dots,x_k$ set to $True$ and $x_1,\dots,x_n$ set to satisfiable assignment of the original $\Phi$.
    \end{itemize}
    Let $\Psi$ be the formula $\Phi$ after these steps, therefore:
    \begin{equation*}
        \csat(\Phi, (x_1,\dots,x_n)) = \csat(\Psi, (x_1,\dots,x_k))
    \end{equation*}
    The encoding of literals preserves the satisfiable assignments when the number of clauses equals the number of variables. Therefore: $\textsc{Formula}(e_{\Psi}) = \Psi$. By construction $e_{\Psi} = e_{\Phi}$ and thus the thesis follows:
    \begin{align*}
        \csat(\textsc{Formula}(e_{\Phi}), (x_1,\dots,x_k))
        &=\csat(\textsc{Formula}(e_{\Psi}), (x_1,\dots,x_k)) \\
        &=\csat(\Psi, (x_1,\dots,x_k)) \\
        &=\csat(\Phi, (x_1,\dots,x_n))
    \end{align*}
\end{proof}

We can now put constraints on a TM for some $A \in \npcep$.

\begin{theorem}\label{tmbs}
    Let $A \in \npcep$. Then, there exists a polytime NDTM $\mathcal{M}$ with access to $\ComparecSAT$ oracle, that recognized $A$ and satisfies the following conditions:
    \begin{enumerate}
        \item $\mathcal{M}$ calls the oracle $A$ precisely once, at the very end of the computation, and $\mathcal{M}$ returns the oracle answer,
        \item the boolean formulae that $\mathcal{M}$ sends to the oracle are all in CNF,
        \item $\mathcal{M}$ has two dedicated tapes used for communication with the oracle. When $\mathcal{M}$ enters its $q_{ASK}$ state to ask an $\ComparecSAT$ query $(\zeta,\xi)$, then on the first tape it contains a representation of the first boolean formula $\zeta$ as a sequence of zeros and ones $e_{\zeta}$, and on the second tape it contains a representation $e_{\xi}$ of $\xi$,
        \item there exists a polynomial $p$ such that for any input $w$, when $\mathcal{M}$ calls the oracle with $(\Phi,\Psi)$ query then $|e_{\Phi}| = |e_{\Psi}| = p(|w|)$. In other words, the size of the query depends only on the size of the input, not on the input itself.
        \item there exist a polynomial $q$ such that for any input $w$, $\mathcal{M}$ runs in precisely $q(|w|)$ steps.
    \end{enumerate}
\end{theorem}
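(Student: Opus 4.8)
The plan is to begin from a machine that already has the cleanest possible oracle behaviour and then successively enforce the remaining conditions by purely local modifications, verifying at each stage that the truth value of the single oracle query is unaffected. I would obtain condition (1) directly from Theorem \ref{npsharppisnpcep}. Since $A \in \npcep = \npsharpp$, the construction in that theorem yields a polytime NDTM that issues exactly one $\ComparecSAT$ query, does so at the very end of the computation, and returns the oracle's answer verbatim. I take this machine as the starting point $M_0$: it already satisfies condition (1), so all remaining work is to reformat the query and regularize the running time without changing whether $\csat(\zeta) = \csat(\xi)$ for the emitted pair $(\zeta,\xi)$.

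Next I would enforce conditions (2) and (3). To obtain CNF formulae I would replace each formula that $M_0$ writes by a parsimonious CNF equivalent produced by a Tseitin-style transformation: the auxiliary variables are functionally determined by the originals, so each satisfying assignment extends uniquely, the number of satisfying assignments is unchanged, and the size grows only polynomially. Because the two formulae of a query may then have different variable counts, I would pad the smaller one with unit clauses forcing fresh variables to $True$; a forced variable leaves $\csat$ untouched, so equality of the two counts is preserved in both directions. With the two CNFs in hand, I would have the machine write their $0-1$ encodings $e_\zeta, e_\xi$ onto two dedicated oracle tapes; Theorem \ref{01encworks} guarantees that $\textsc{Formula}(e_\zeta)$ and $\textsc{Formula}(e_\xi)$ have the same counts as $\zeta$ and $\xi$, so this formatting step is faithful.

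Finally I would enforce conditions (4) and (5) by padding. Since $M_0$ runs in polynomial time, across all nondeterministic paths and all inputs of a fixed length $|w|$ both the clause count and the variable count of every query are bounded by a single polynomial $r(|w|)$. I would pad every query up to the common value $k = r(|w|)$, using tautological clauses such as $x_1 \vee \neg x_1$ to raise the clause count and unit clauses to raise the variable count; by Theorem \ref{01encworks} the resulting encodings all have the uniform length $p(|w|) = 2r(|w|)^2$, giving condition (4). For condition (5) I would attach a step counter that runs the machine for exactly $q(|w|)$ steps on every path, inserting dummy moves after the oracle answer has been returned and the genuine computation has halted; this is the standard clock construction and affects neither the language nor the query.

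The main obstacle is not any single step but the bookkeeping that every transformation preserves the relation $\csat(\zeta) = \csat(\xi)$ rather than mere satisfiability. This is why parsimony is essential throughout: the Tseitin conversion must be count-preserving, and every padding device (forced unit clauses, tautological clauses) must leave $\csat$ exactly fixed, so that the equality guarded by the $\ComparecSAT$ oracle, and hence the machine's final answer, is identical before and after. Once each device is checked to be count-preserving, all five conditions hold simultaneously for the resulting machine $\mathcal{M}$.
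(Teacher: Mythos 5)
Your proposal follows essentially the same route as the paper: condition (1) is imported from Theorem \ref{npsharppisnpcep}, conditions (2)--(3) come from a count-preserving CNF conversion plus the $0$-$1$ encoding and Theorem \ref{01encworks}, condition (4) from padding every query to a size determined solely by the polynomial running-time bound, and condition (5) from a clock. In places you are more careful than the paper: you make explicit that the CNF conversion must be \emph{parsimonious} (Tseitin-style, with auxiliary variables functionally determined by the originals), and your padding devices (unit clauses forcing fresh variables to $True$, tautological clauses to raise the clause count) are transparently count-preserving, whereas the paper simply asserts the CNF conversion is possible and relies on the encoding's built-in padding.

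There is, however, one concrete misstep: for condition (5) you insert the dummy moves \emph{after} the oracle answer has been returned. As stated this breaks condition (1), which demands that the single oracle call occur at the very end of the computation with its answer returned as the machine's answer; worse, it defeats the purpose of the clock. Different nondeterministic paths of $M_0$ halt at different times, so padding after the call leaves the time of the oracle call path-dependent, and the downstream Cook--Levin reduction (Theorem \ref{nice repr}) needs the query to sit on the two oracle tapes at one fixed, input-independent step---the query is read off from the variables $y_{o1,l,1,r(|w|)-1}$, i.e.\ from the tape contents at the step just before the call. The correct ordering is the paper's: pre-fill a counter tape with $p_{M'}(|w|)$ marks, decrement it during the genuine computation, and when the machine would enter $q_{ASK}$, stall until the counter is exhausted, so that on every path the oracle is called at exactly step $q(|w|)$ and the computation ends there. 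With that reordering, your construction is correct.
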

\begin{proof}
    By theorem \ref{npsharppisnpcep}, there exists polytime NDTM $M$ with access to $\ComparecSAT$ oracle satisfying the first condition.
    
    Since every boolean formula can be transformed into CNF in polynomial time, we can assume that $M$ also satisfies the second condition.
    
    The third condition is just a formal description of how the machine should communicate with the oracle. By theorem \ref{01encworks}, the encoding of the formula does not alter the number of satisfying assignments, and thus we can assume that $M$ further satisfies the third condition.

    The fourth condition is more involved. Let $p_M$ be the polynomial bounding number of steps of $M$, i.e. for input $w$, $M$ terminates in at most $p_M(|w|)$ steps. Define $M'$ as a TM that follows $M$, however, when $M$ enters $q_{ASK}$ state, $M'$ instead changes the contents of oracle tapes: the boolean formulae are transformed to have encoding sizes $2p_M(|w|)^2$ each -- via the same procedure as in the construction of the encoding, we add clauses and variables to have $p_M(|w|)$ of each.
    
    Therefore, $M'$ satisfies the first three conditions, as $M$ does, and it satisfies the fourth condition with $p = 2p_M^2$. Further, $M'$ is polytime, as $M$ is, and extra steps of $M'$ take at most $O(p_M(|w|))$ steps. Let $p_{M'}$ be the polynomial bounding number of steps of $M'$, i.e. for input $w$, $M'$ terminates in at most $p_{M'}(|w|)$ steps.

    Finally, we define $\mathcal{M}$ as a TM that follows $M'$. However, if $M'$ enters $q_{ASK}$ state, then $\mathcal{M}$ stalls for some number of steps so that it always calls the oracle in the $q(|w|)$ step, where $q$ is some polynomial. This can be achieved, for example, by equipping $\mathcal{M}$ with an extra `counter' tape on which it initially fills $p_{M'}(|w|)$ cells with $1$. Then, $\mathcal{M}$ starts computation, following $M'$ and removing $1$ cell from the counter during each step. When the oracle should be called, $\mathcal{M'}$ first continues to clear the counter and only then calls the oracle. Note that $p_{M'}(|w|)$ can be computed based on the input size but not the input itself. Thus, the counter creation and clearance require the number of steps depending only on $|w|$ but not $w$. Therefore, the total number of steps on input $w$ is indeed given by some $q(|w|)$.
\end{proof}

We can now prove that the crafted problem is $\npcep$-complete.

\begin{theorem}\label{nice repr}
    $\SATandComparecSAT$ is $\npcep$-complete.
\end{theorem}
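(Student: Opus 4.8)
The plan is to prove containment and hardness separately.

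\emph{Containment.} I would exhibit a polynomial-time NDTM with a single $\ComparecSAT$ oracle call which, on input $(n,m,\psi,\rho)$, nondeterministically guesses a valuation $v$ of $x_1,\dots,x_n$, substitutes it to form $v(\psi)$ on $y_1,\dots,y_m$ and $v(\rho)$ on $z_1,\dots,z_m$, queries the $\ComparecSAT$ oracle on $(v(\psi),v(\rho))$, and returns the oracle's answer. By theorem~\ref{npsharppisnpcep} a single terminal oracle call whose answer is returned places the problem in $\npcep$, and correctness is immediate from the definition of $\SATandComparecSAT$.

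\emph{Hardness setup.} Let $A\in\npcep$ and let $\mathcal{M}$ be the normalised machine from theorem~\ref{tmbs}: on input $w$ it runs exactly $q(|w|)$ steps, makes one terminal $\ComparecSAT$ query whose two CNF operands sit on dedicated tapes as $0-1$ encodings $e_\Phi,e_\Psi$ of common length $p(|w|)=2m^2$ (each decoding to a CNF on $m$ variables), and returns the oracle's answer. Given $w$, I apply the Cook--Levin tableau construction to the length-$q(|w|)$ computation, obtaining tableau variables (which become the shared variables $x_1,\dots,x_n$) and a formula $C_w(x)$ whose satisfying assignments correspond exactly to the computation paths of $\mathcal{M}$ on $w$. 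Among the tableau variables lie the cells of the two oracle tapes in the final configuration; I name the relevant bits $\beta^{+}_{ji},\beta^{-}_{ji}$ for the first tape and $\gamma^{+}_{ji},\gamma^{-}_{ji}$ for the second, indexed so that position $(j,i)$ records whether literal $x_i$, respectively $\neg x_i$, occurs in clause $j$ of the decoded formula.

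\emph{Construction and correctness.} The key gadget reads a decoded formula off these bits. Setting
\[
\psi_{\mathrm{read}}(x,y) = \bigwedge_{j=1}^{m}\bigvee_{i=1}^{m}\bigl((\beta^{+}_{ji}\wedge y_i)\vee(\beta^{-}_{ji}\wedge\neg y_i)\bigr),
\]
for any valuation $v$ of $x$ the term $\beta^{+}_{ji}\wedge y_i$ collapses to $y_i$ when $v$ sets that bit to $1$ and to $False$ otherwise, so $v(\psi_{\mathrm{read}})$ is exactly $\textsc{Formula}(e_\Phi)$ for the string recorded by $v$; I define $\rho_{\mathrm{read}}(x,z)$ symmetrically from the $\gamma$ bits. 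To neutralise valuations of $x$ that are not valid tableaux I guard the gadgets, outputting
\[
\psi = C_w(x)\wedge\psi_{\mathrm{read}}(x,y), \qquad \rho = \neg C_w(x)\vee\rho_{\mathrm{read}}(x,z),
\]
together with the parameters $n$ and $m$. For a valid $v$ one has $\csat(v(\psi))=\csat(\textsc{Formula}(e_\Phi))=\csat(\Phi)$ and likewise $\csat(v(\rho))=\csat(\Psi)$ by theorem~\ref{01encworks}, so equality of the two counts coincides with the oracle answering $True$ on that path; for an invalid $v$ the guards force $v(\psi)=False$ and $v(\rho)=True$, giving counts $0$ and $2^{m}\neq 0$ (assuming $m\geq 1$, which may be arranged by padding), which never coincide. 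Hence some valuation equalises the counts iff $\mathcal{M}$ has an accepting path iff $w\in A$, and the map $w\mapsto(n,m,\psi,\rho)$ is clearly polynomial-time.

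\emph{Main obstacle.} I expect the delicate point to be the Cook--Levin step: one must run the tableau construction so that (i) the final oracle-tape cells are exposed as named variables feedable into the $\textsc{Formula}$-style gadget, and (ii) the correspondence between satisfying assignments of $C_w$ and genuine computation paths is tight enough that each path contributes the correct query $(\Phi,\Psi)$. The guarding trick then handles the remaining, spurious valuations of $x$, and checking that it forces invalid valuations to unequal counts while leaving valid ones untouched is the crux of the correctness argument.
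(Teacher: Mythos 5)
Your proof is correct and takes essentially the same route as the paper's: the containment argument is identical, and the hardness argument uses the same ingredients --- the normalised machine of theorem \ref{tmbs}, a Cook--Levin tableau exposing the final oracle-tape cells, and the same literal-selection gadget for decoding the query (your $\psi_{\mathrm{read}}$ is exactly the paper's $\Psi_{\mathcal{M},w}$). The only local difference is how spurious tableau valuations are neutralised: the paper forms $\Phi_{\mathcal{M},w}\wedge(\Psi_{\mathcal{M},w}\vee y)$ versus $\Rho_{\mathcal{M},w}\vee y'$, so an invalid valuation yields counts $0$ against $\csat(v(\Rho_{\mathcal{M},w}))+2^k>0$, whereas you guard with $C_w\wedge\psi_{\mathrm{read}}$ against $\neg C_w\vee\rho_{\mathrm{read}}$, yielding $0$ against $2^m$; both gadgets are sound.
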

\begin{proof}$ $\\
    \sloppy \indent\textbf{Containment:} given an instance $(n,m,\psi,\rho)$ of $\SATandComparecSAT$, a NDTM with $\ComparecSAT$ oracle can do the following:\begin{itemize}
        \item Non-deterministically choose a valuation of $x_1,\dots,x_n$,
        \item Call the oracle on $(v(\psi), v(\rho))$ and returns its answer.
    \end{itemize}

    \textbf{Hardness:} Let $A \in \npcep$. We must show that problem $A$ can be reduced to $\SATandComparecSAT$. Let $w$ be any instance of $A$. Let $\mathcal{M}, p, q$ be a TM and polynomial guaranteed to exist by theorem \ref{tmbs}.
    
    Let $\Phi_{\mathcal{M},w}$ be a boolean formula constructed by Cook-Levin reduction that describes whether $M$ reaches $q_{ASK}$, and let $X := x_1, x_2, \dots, x_n$ be the variables from Cook-Levin construction. These include variables of the form $y_{t,i,s,k}$ corresponding to statement ``on $k$\textsuperscript{th} step of the computation the $i$\textsuperscript{th} cell of $t$\textsuperscript{th} tape contains symbol $s$'' and similar variables describing the state of TM in the given step of computation and the positions of heads.

    Let $o1$ and $o2$ be the two oracle tapes used to communicate with the oracle, i.e. $o1$ stores the first boolean formula as a sequence, and $o2$ stores the second boolean formula, both as sequences of zeros and ones. By construction, $w \in A$ if and only if in the answer to the final (and only) oracle call is $True$. This oracle call is entirely encoded on the two tapes and thus can be expressed by the same variables as $\phi_{A,w}$.

    Let $v$ be a valuation of the variables $X$ satisfying $\phi_{A,w}$, so a description of the path taken to reach $q_{ASK}$. Let $\Tilde{v} \colon X \to \{0,1\}$ with $\Tilde{v}(x_i) = 1$ when $v(x_i) = True$ and $\Tilde{v}(x_i) = 0$ when $v(x_i) = False$.
    The first stored boolean formula is:
    \begin{equation*}
        \zeta_{\mathcal{M},w,v} = \textsc{Formula}(\Tilde{v}(y_{o1, 1, 1, r(|w|)-1}) \Tilde{v}(y_{o1, 2, 1, r(|w|)-1}) \dots \Tilde{v}(y_{o1, p(|w|), 1, r(|w|)-1}))
    \end{equation*}
    By theorem \ref{tmbs}, only the above variables matter -- the oracle call is always precisely on positions $1$ to $p(|w|)$ of $o1$ tape in the $r(|w|)-1$\textsuperscript{th} step of computation (as the TM in total has $r(|w|)$ steps where the last step is the oracle call). By theorem \ref{01encworks}, the only two possible symbols are $0$ and $1$, and $\Tilde{v}(x_{o1, i, 1, r(|w|)-1})$ is $1$ when the corresponding symbol is $1$ and $0$ when the corresponding symbol is $0$. Note that $p(|w|) = 2k^2$ for some $k \in \mathbb{Z}$, as the $(0-1)$ representation of CNF formulae must be of such length. Let $u_1, \dots, u_k$ be the variables of $\zeta_{\mathcal{M},w,v}$. Let $y'_{l}$ be a shorthand for $y_{o1, l, 1, r(|w|)-1}$. By construction of $\textsc{Formula}$, the value $\Tilde{v}(y'_{2k(i-1)+(2j-1)})$ for $i, j \in [1..k]$ describes whether in the $i$\textsuperscript{th} clause we put literal $u_j$ or $False$. Thus, such literal is equal to the expression $u_j \wedge y'_{2k(i-1)+(2j-1)}$ under the valuation $v$. Similarly, the value $\Tilde{v}(y'_{2k(i-1)+2j})$ describes whether in the $i$\textsuperscript{th} clause we put literal $\neg u_j$ or $False$. Thus, such literal is equal to the expression $\neg u_j \wedge y'_{2k(i-1)+2j}$ under the valuation $v$. Now, define $\Psi_{\mathcal{M},w}$ as follows:
    \begin{gather*}
        \Psi_{\mathcal{M},w} := \psi_1 \wedge \dots \wedge \psi_k,\quad \text{where:}\\
        \psi_i := ((u_1 \wedge y'_{2k(i-1)+1}) \vee (\neg u_1 \wedge y'_{2k(i-1)+2}) \vee \dots \vee (u_k \wedge y'_{2ki-1}) \vee (\neg u_k \wedge y'_{2ki}))
    \end{gather*}
    Then, $v(\psi_i)$ is equal to the $i$\textsuperscript{th} clause of $\zeta_{\mathcal{M},w,v}$, and hence $v(\Psi_{\mathcal{M},w})$ is equal $\zeta_{\mathcal{M},w,v}$.
    
    Similarly, the second stored formula is:
    \begin{equation*}
        \xi_{\mathcal{M},w,v} = \textsc{Formula}(\Tilde{v}(x_{o2, 1, 1, r(|w|)-1}) \Tilde{v}(x_{o2, 2, 1, r(|w|)-1}) \dots \Tilde{v}(x_{o2, p(|w|), 1, r(|w|)-1}))
    \end{equation*}
    Let $u'_1, \dots, u'_k$ be the variables of it. Via a similar procedure, we can produce $\Rho_{\mathcal{M},w}$ such that $v(\Rho_{\mathcal{M},w}) = \xi_{\mathcal{M},w,v}$.

    \sloppy By construction, $w \in A$ if and only if there exists a valuation $v$, such that $\phi_{\mathcal{M},w}$ is satisfied and $\csat(\zeta_{\mathcal{M},w,v}) = \csat(\xi_{\mathcal{M},w,v})$. Equivalently, $v(\Phi_{\mathcal{M},w}) = True$ and $\csat(v(\Psi_{\mathcal{M},w})) = \csat(v(\Rho_{\mathcal{M},w}))$.

    We can now define an equivalent instance of $\SATandComparecSAT$:\begin{itemize}
        \item Numbers $n, k+1$, where $n$ is the number of variables of $\Phi_{\mathcal{M},w}$ and $k$ is the number of variables of $v(\Phi_{\mathcal{M},w})$ and $v(\Rho_{\mathcal{M},w})$,
        \item Boolean formula $\Psi$ defined as $\Phi_{\mathcal{M},w} \wedge (\Psi_{\mathcal{M},w} \vee y)$ on $x_1,\dots,x_n,u_1,\dots,u_k,y$,
        \item Boolean formula $\Rho$ defined as $\Rho_{\mathcal{M},w} \vee y'$ on $x_1,\dots,x_n,u'_1\dots,u'_k,y'$.
    \end{itemize}
    Then $\csat(v(\Psi)) = \csat(v(\Rho))$ under valuation $v$ of $x_1,\dots,x_n$ happens if and only if the following hold, where $U=u_1,\dots,u_k$ and $U' = u'_1,\dots,u'_k$:
    \begin{gather*}
        \csat(v(\Phi_{\mathcal{M},w} \wedge (\Psi_{\mathcal{M},w} \vee y)), U,y) = \csat(v(\Rho_{\mathcal{M},w} \vee y'), U',y)
    \end{gather*}
    Since $v(\Phi_{\mathcal{M},w})$ is either $True$ or $False$, the number $\csat(v(\Phi_{\mathcal{M},w}), \emptyset)$ equals $0$ or $1$. $y$ appears in the formula once, used in disjunction with $\Psi_{\mathcal{M},w}$. Hence we get the following:
    \begin{align*}
        \csat(v(\Phi_{\mathcal{M},w} &\wedge (\Psi_{\mathcal{M},w} \vee y)), U,y) =
        \csat(v(\Phi_{\mathcal{M},w}), \emptyset) \cdot \csat(v(\Psi_{\mathcal{M},w} \vee y), U,y) \\&=
        \csat(v(\Phi_{\mathcal{M},w}), \emptyset) \cdot (\csat(v(\Psi_{\mathcal{M},w}), U) + 2^k) \\&=
        \begin{cases}
            0,\quad &\text{when $\Phi_{\mathcal{M},w}$ is not satisfied under $v$}\\
            \csat(v(\Psi_{\mathcal{M},w}, U)) + 2^k,\quad &\text{when $\Phi_{\mathcal{M},w}$ is satisfied under $v$}
        \end{cases}
    \end{align*}
    Similarly:
    \begin{gather*}
        \csat(v(\Rho_{\mathcal{M},w} \vee y')) = \csat(v(\Psi_{\mathcal{M},w}, U)) + 2^k
    \end{gather*}
    
    \sloppy Since $2^k > 0$, the two expressions equalize precisely when $\Phi_{\mathcal{M},w}$ is satisfied under $v$ and ${\csat(v(\Psi_{\mathcal{M},w}, U)) = \csat(v(\Rho_{\mathcal{M},w}, U))}$. By above, this is equivalent to $\mathcal{M}$ accepting $w$. The above $\SATandComparecSAT$ instance is constructible in polytime in $|w|$, which ends the reduction and the entire proof.
\end{proof}

Now that we have crafted the $\npcep$-complete problem, we finally have a problem to reduce from to finish the main proof.

\subsection{Hardness -- ZH encoding}\label{zhencodingsection}
The goal is to show that $\StateEq$ and $\ContainsEntryk$ are both $\npcep$-hard. It suffices to reduce from $\SATandComparecSAT$. We already know how to represent both a boolean formula and the number of satisfying assignments of a boolean formula in phase-free ZH. Hence, the ZH representation is almost immediate.

\begin{theorem}\label{state equality}
    $\StateEq$ is $\npcep$-hard.
\end{theorem}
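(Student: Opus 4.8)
The plan is to give a direct, polynomial-time reduction from $\SATandComparecSAT$, which is $\npcep$-complete by Theorem~\ref{nice repr}. Given an instance $(n,m,\psi,\rho)$, I would build the phase-free ZH encodings $D_{\psi}$ and $D_{\rho}$ using the construction of Section~\ref{phasefreezh}; recall that $D_{\psi}$ has $n+m$ left dangling edges, one for each of $x_1,\dots,x_n,y_1,\dots,y_m$, together with a single right output edge, and likewise $D_{\rho}$ carries edges for $x_1,\dots,x_n,z_1,\dots,z_m$. The key move is to attach white spiders to exactly the dangling edges of the $y_i$ in $D_{\psi}$ and of the $z_i$ in $D_{\rho}$, leaving the $n$ edges for the $x_i$ together with the output edge dangling. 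Call the resulting diagrams $D_1$ and $D_2$; each then has $n+1$ dangling edges, so the dangling-edge counts match as $\StateEq$ requires.

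Next I would evaluate $\diageval[D_1]\ket{v}$ on a computational basis state $\ket{v}$ feeding the $n$ remaining $x$-edges. Fixing the $x$-variables to $v$ turns the sub-diagram into the phase-free ZH encoding of $v(\psi)$ on the variables $y_1,\dots,y_m$ with white spiders on all of its variable edges, so Lemma~\ref{formulaattached} applies verbatim to $v(\psi)$ and yields $\diageval[D_1]\ket{v} = \csat(v(\psi),Y)\ket{1} + (2^m - \csat(v(\psi),Y))\ket{0}$, where $Y = y_1,\dots,y_m$. Symmetrically, $\diageval[D_2]\ket{v} = \csat(v(\rho),Z)\ket{1} + (2^m - \csat(v(\rho),Z))\ket{0}$ with $Z = z_1,\dots,z_m$. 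Comparing these two vectors coordinate-wise, $\diageval[D_1]\ket{v} = \diageval[D_2]\ket{v}$ holds precisely when the $\ket{1}$-coefficients agree, i.e.\ when $\csat(v(\psi),Y) = \csat(v(\rho),Z)$; the $\ket{0}$-coefficients then agree automatically, since both coefficient pairs sum to $2^m$. Hence a basis state $\ket{v}$ equalizing $D_1$ and $D_2$ exists if and only if a valuation $v$ of $x_1,\dots,x_n$ equalizing the two $\csat$ values exists, which is exactly the $\SATandComparecSAT$ condition.

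Finally I would observe that the reduction runs in polynomial time: the ZH encoding of a formula is polynomial in the formula's size, and attaching the white spiders is a constant-size local modification per edge. Together with the containment Theorem~\ref{cont stateeq} and the identity $\npsharpp = \npcep$ of Theorem~\ref{npsharppisnpcep}, this upgrades $\StateEq$ from $\npcep$-hard to $\npsharpp$-complete. I expect no serious obstacle here, since the earlier lemmas do the heavy lifting; the one point that genuinely needs care is justifying that clamping the $x$-edges to a basis state specializes the picture to the hypothesis of Lemma~\ref{formulaattached} applied to $v(\psi)$ (and $v(\rho)$) rather than to $\psi$ itself, so that the per-valuation count $\csat(v(\psi),Y)$ is the quantity that surfaces on the output edge.
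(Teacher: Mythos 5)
Your proposal is correct and takes essentially the same route as the paper: a polynomial-time reduction from $\SATandComparecSAT$ in which $D_{\psi}$ and $D_{\rho}$ get white spiders attached to the $y$- and $z$-edges, with correctness coming from Lemma~\ref{formulaattached} applied to $v(\psi)$ and $v(\rho)$. The only (immaterial) difference is that the paper additionally caps the output edge with the `Is TRUE' block $\bra{1}$, so that $\diageval[D_1]\ket{v}$ is the scalar $\csat(v(\psi),y_1,\dots,y_m)$, whereas you leave that edge dangling and compare the one-qubit outputs, correctly noting that the $\ket{0}$-coefficients agree automatically because both coefficient pairs sum to $2^m$.
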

\begin{proof}
    We reduce from $\SATandComparecSAT$. Let $(m,n,\psi,\rho)$ be any $\SATandComparecSAT$ instance, where the common variables are $x_1,\dots,x_n$ and extra variables are $y_1,\dots,y_m$ for $\psi$ and $z_1,\dots,z_m$ for $\rho$. Now, define diagrams $D_1, D_2$ as follows:

    \begin{center}
        \tikzfig{d1simple}\\[8ex]
        \tikzfig{d2simple}
    \end{center}

    The variables are not part of the diagram and are written only to help indicate the logical expressions corresponding to the wires.

    Let $\ket{v} = \ket{v_1 \dots v_n}$ be any state from the $n$ qubits computational basis. Then, $v$ gives a valuation of $x_1, \dots, x_n$, by taking $v(x_i) = False$ when $v_i = 0$ and $v(x_i) = True$ when $v_i = 1$. Let $\mathcal{Y} = y_1, \dots, y_m$. We get:
    \begin{align*}
        \diageval[D_1]\ket{v} &=
        \bra{1} \diageval[D_{\psi}] (\ket{v} \otimes (\sqrt{2}^m \ket{+}^{\otimes m})) \\&=
        \sum_{v' \colon \mathcal{Y} \to \{ 0,1\}} \bra{1} \diageval[D_{\psi}] (\ket{v} \otimes (\ket{v'}) \\&=
        \sum_{v' \colon \mathcal{Y} \to \{ 0,1\}} \bra{1} \ket{(v'(v(\psi))} \\&=
        \bra{1} (\csat(v(\psi), \mathcal{Y}) \ket{1} + (2^{m} - \csat(v(\psi), \mathcal{Y})) \ket{0}) \\&=
        \csat(v(\psi), \mathcal{Y})
    \end{align*}
    where $\ket{v'} = \ket{v'(y_1) \dots v'(y_m)}$. The first equality comes from the definition of `Is TRUE' block as $\bra{1}$, the second equality from expanding $\ket{+\dots+}$ as the sum of computational basis states, the third equality from combining valuations $v$ of $x_1,\dots,x_n$ and $v'$ of $y_1,\dots,y_m$ to a single valuation of $x_1,\dots,x_n,y_1,\dots,y_m$, and the fourth equality follows from lemma \ref{formulaattached}.

    Similarly, $\diageval[D_2]\ket{v} = \csat(v(\rho), z_1,\dots,z_m)$. Thus, the diagrams equalize on state $\ket{v}$ precisely when $\csat(v(\psi), y_1,\dots,y_m) = \csat(v(\rho), z_1,\dots,z_m)$, which ends the reduction and the entire proof.
\end{proof}

\begin{example}
    Consider $\SATandComparecSAT$ instance from example \ref{satccsatexample}. The two diagrams $D_1$ and $D_2$ constructed in the proof of \ref{state equality} are presented below. All blocks are translated to the ZH generators. The dashed purple lines are not part of the diagram and are used only to the indicate logical expressions corresponding to the wires.
    
    \begin{center}
        \tikzfig{d1example}\\
        \tikzfig{d2example}
    \end{center}

    The matrix representations of the diagrams correspond to the values found for different assignments of $x_1$ in example \ref{satccsatexample}:
    \begin{gather*}
        \diageval[D_1] = \begin{pmatrix}
            3 & 4
        \end{pmatrix},\qquad
        \diageval[D_2] = \begin{pmatrix}
            3 & 2
        \end{pmatrix}
    \end{gather*}
    and they indeed agree on the first entry.
\end{example}

Note, that the above constructions require only the blocks representing TRUE and FALSE simultaneously, COPY, NOT, AND gates, and Is TRUE check. It means, that $\StateEq$ problem is $\npsharpp$-hard not only over phase-free ZH but also over any language in which these blocks can be represented like ZX, ZW, etc. In particular, the version of $\StateEq$ problem where the input consists of two tensors constructed from the mentioned blocks is also $\npsharpp$-complete.

\begin{corollary}
    The problem $\ContainsEntry$ is $\npsharpp$-hard for all generator sets in which the following tensors can be achieved:
    \begin{gather*}
    \begin{pmatrix}
        1\\
        1
    \end{pmatrix},
    \begin{pmatrix}
        1 & 0\\
        0 & 0\\
        0 & 0\\
        0 & 1
    \end{pmatrix},
    \begin{pmatrix}
        0 & 1\\
        1 & 0
    \end{pmatrix},
    \begin{pmatrix}
        1 & 1 & 1 & 0\\
        0 & 0 & 0 & 1
    \end{pmatrix},\ \text{and}\ 
    \begin{pmatrix}
        0 & 1
    \end{pmatrix}.
\end{gather*}
\end{corollary}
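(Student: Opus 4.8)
The plan is to give a polynomial-time many-one reduction from $\SATandComparecSAT$. By Theorem~\ref{nice repr} this problem is $\npcep$-complete, and by Theorem~\ref{npsharppisnpcep} we have $\npcep = \npsharpp$, so such a reduction establishes $\npsharpp$-hardness. The one genuinely new issue compared with the reduction for $\StateEq$ is that $\ContainsEntry$ tests for a single prescribed value rather than comparing two entries, whereas the five listed tensors all have non-negative entries and so cannot form a difference to be tested against $0$. I would therefore turn the equality of two counts into the statement that a single counting diagram hits a fixed target, and feed that target in as the input value $k$.

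Given an instance $(n,m,\psi,\rho)$, with $\psi$ on $x_1,\dots,x_n,y_1,\dots,y_m$ and $\rho$ on $x_1,\dots,x_n,z_1,\dots,z_m$, I would first rename $z_i$ to $y_i$ in $\rho$ and add a fresh selector variable $s$, then define the formula
\[
  \Xi := (s \wedge \psi) \vee (\neg s \wedge \neg\rho)
\]
on free variables $x_1,\dots,x_n$ and bound variables $y_1,\dots,y_m,s$. For any valuation $v$ of the $x_i$ the two disjuncts are never simultaneously satisfiable (they disagree on $s$), so summing over $s=1$ and $s=0$ gives
\[
  \csat(v(\Xi)) = \csat(v(\psi)) + \bigl(2^m - \csat(v(\rho))\bigr),
\]
which equals $2^m$ precisely when $\csat(v(\psi)) = \csat(v(\rho))$, with no false positives since the implication is an equivalence.

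Next I would build the diagram $D$ realising $\Xi$ exactly as in the proof of Theorem~\ref{state equality}: the $n$ wires for $x_1,\dots,x_n$ are left dangling, the wires for $y_1,\dots,y_m$ and $s$ are capped with the superposition state $\ket0+\ket1$, the output wire is closed with the Is~TRUE effect $\bra1$, and the internal logic of $\Xi$ is assembled from COPY, NOT and AND blocks (OR via De Morgan, as in Section~\ref{phasefreezh}). By Lemma~\ref{formulaattached}, $\diageval[D]\ket v = \csat(v(\Xi))$, so $\diageval[D]$ is the row vector whose entry at position $v$ is $\csat(v(\Xi))$. Setting the target $k := 2^m$, the diagram $D$ contains the entry $k$ if and only if some valuation $v$ satisfies $\csat(v(\psi)) = \csat(v(\rho))$, i.e.\ if and only if the $\SATandComparecSAT$ instance is a yes-instance. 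Both $D$ and the $(m{+}1)$-bit number $k=2^m$ are produced in polynomial time, so this is exactly the version of $\ContainsEntry$ in which $k$ is part of the input.

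Finally, every component of $D$ is one of the five listed tensors (the superposition $\ket0+\ket1$, COPY, NOT, AND, and $\bra1$). Hence in any generator set that can realise these five tensors by sub-diagrams, $D$ can be assembled with only polynomial overhead and the identical reduction goes through, yielding $\npsharpp$-hardness. The step I expect to be the main obstacle is the first one: justifying that equality of counts can be faithfully encoded as a single target value using only non-negativity-preserving generators. This is what forces the target to be the instance-dependent value $2^m$ and rules out the simpler $k=0$ encoding that would be available if a box or $Z$-spider (hence genuine subtraction) were among the generators.
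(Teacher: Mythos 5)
Your proof is correct, but it follows a genuinely different route from the paper. The paper offers no separate argument for this corollary: it is stated as an immediate consequence of the observation, in the paragraph just above it, that the $\StateEq$ reduction of Theorem~\ref{state equality} uses only the five listed blocks --- indeed the corollary's subject reads like a slip for $\StateEq$, because the paper's actual reduction for $\ContainsEntryk$ (Theorem~\ref{contains entry}) genuinely needs subtraction and rescaling, and the paper says so explicitly right after that theorem, listing the white-not tensor $\mathrm{diag}(1,-1)$ and the star scalar $\frac{1}{2}$ as additional requirements beyond the five tensors. Your argument, in contrast, proves the literal statement about $\ContainsEntry$ from the five non-negative tensors alone: the selector trick $\Xi = (s\wedge\psi)\vee(\neg s\wedge\neg\rho)$ gives $\csat(v(\Xi)) = \csat(v(\psi)) + 2^m - \csat(v(\rho))$, so equality of the two counts becomes a single count hitting the target $2^m$, and no negative entries are ever needed; the rest of your construction and the appeal to Theorems~\ref{nice repr} and~\ref{npsharppisnpcep} are exactly as in the paper's $\StateEq$ argument. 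The trade-off is the one you flag yourself: the target $2^m$ is instance-dependent, so you obtain hardness only for the version of $\ContainsEntry$ in which $k$ is part of the input (which is the natural reading of the unsubscripted $\ContainsEntry$, given the paper's earlier remark that its results also cover that version), whereas the paper's Theorem~\ref{contains entry} gives hardness for each fixed dyadic $k$ at the price of the extra generators. In short, your proof establishes the corollary as written --- something the paper's own text only supports after reading its subject as $\StateEq$ --- while the paper's approach buys fixed-$k$ hardness but not the restricted generator set.
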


All entries in these matrices are $0$ or $1$, so any tensor constructed from these has matrix representation with all entries being natural numbers. This is even stricter than phase-free ZH diagrams, where entries can be dyadic rationals due to the star generator.

\begin{theorem}\label{contains entry}
    $\ContainsEntryk$ is $\npcep$-hard.
\end{theorem}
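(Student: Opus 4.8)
The plan is to reduce from $\SATandComparecSAT$, which is $\npcep$-complete by theorem \ref{nice repr}; composed with theorem \ref{npsharppisnpcep} this gives $\npsharpp$-hardness. Fix an instance $(n,m,\psi,\rho)$ and, exactly as in the proof of theorem \ref{state equality}, write $a_v := \csat(v(\psi), y_1,\dots,y_m)$ and $b_v := \csat(v(\rho), z_1,\dots,z_m)$ for a valuation $v$ of $x_1,\dots,x_n$; the answer is $True$ precisely when $a_v = b_v$ for some $v$. The goal is to build a single diagram $D$ whose $n$ dangling edges are the $x$-wires and whose entry indexed by $v$ is $a_v - b_v + k$. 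Since $a_v - b_v \in \mathbb{Z}$, this entry equals $k$ exactly when $a_v = b_v$ and otherwise differs from $k$ by a nonzero integer; hence $k$ appears in $\diageval[D]$ iff $a_v = b_v$ for some $v$, iff the instance is $True$.

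The crucial step is realising the \emph{signed} combination $a_v - b_v$ as one entry. This is the main obstacle, and it is conceptual rather than computational: a diagram evaluates to a number by multiplying the scalars of its parts, so placing the two gadgets of theorem \ref{state equality} side by side would compute the product $a_v b_v$, not the difference. The resolution is a signed-count trick. Introduce a selector variable $t$ and the formula $\Xi := (\neg t \wedge \psi \wedge z_1 \wedge \dots \wedge z_m) \vee (t \wedge \rho \wedge y_1 \wedge \dots \wedge y_m)$, whose two disjuncts are mutually exclusive because of the opposite literals $\neg t$ and $t$. The $t=0$ branch has $a_v$ satisfying $(y,z)$-assignments (the $z_i$ being pinned) and the $t=1$ branch has $b_v$ of them. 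Attaching white spiders (the state $\ket{0}+\ket{1}$) to the $y$- and $z$-edges sums with weight $+1$, while attaching a white-NOT spider (the state $\ket{0}-\ket{1}$) to the $t$-edge weights the two branches by $(-1)^t$; applying the Is-TRUE effect $\bra{1}$ to the output then reads off $\sum_{y,z,t}(-1)^t[\Xi(v,y,z,t)=1] = a_v - b_v$, in the spirit of lemma \ref{formulaattached}.

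It remains to add the constant $k$. Writing $k = s/2^q$ with $s \in \mathbb{Z}$, I would (i) scale the read-off to $2^q(a_v - b_v)$ by adjoining $q$ free summed variables to the two branches (each doubling the branch count); (ii) adjoin one further disjoint branch over fresh variables whose count is $|s|$ by lemma \ref{number to formula}, carrying a white or white-NOT state on its selector according to $\mathrm{sign}(s)$, so that it contributes $+s$ to the signed total; and (iii) post-compose with $q$ star generators (each the scalar $\tfrac12$) to divide by $2^q$. The resulting entry is $(a_v - b_v) + s/2^q = a_v - b_v + k$, and the whole construction is manifestly polynomial in $|w|$.

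Note that, in contrast with the non-negative construction for $\StateEq$, this reduction genuinely needs the white-NOT and star generators, which supply the negative and dyadic values required to hit an \emph{arbitrary fixed} target $k$; the purely $0/1$-valued gadget set suffices only for the threshold value $2^m$, which is the version recorded in the preceding corollary. The remaining work — verifying mutual disjointness of the branches so counts genuinely add, checking the exact polynomial bound, and confirming that no entry with $a_v \neq b_v$ accidentally equals $k$ — is routine, the last point being immediate from the integrality of $a_v - b_v$.
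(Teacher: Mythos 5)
Your proof is correct, and it shares the paper's skeleton — reduce from $\SATandComparecSAT$ and build a single diagram whose entry indexed by $v$ is an integer shift of $k$ that equals $k$ exactly when the two counts agree — but both of its key devices differ from the paper's. (1) For the signed difference: the paper keeps the two formula gadgets of theorem \ref{state equality} separate and contracts their readout wires with the antisymmetric effect $\bra{01}-\bra{10}$ (white-NOT and dark-NOT on a cap), obtaining $\tfrac{1}{2^{m+1}}(a_0b_1-a_1b_0)$, which collapses to $b_1-a_1$ via $a_0=2^{m+1}-a_1$ and $b_0=2^{m+1}-b_1$; you instead merge $\psi$ and $\rho$ into one formula with a selector variable $t$ and place the state $\ket{0}-\ket{1}$ on the $t$-wire, so the readout is the signed count $\sum_{y,z,t}(-1)^t[\Xi=1]=a_v-b_v$ directly — a signed analogue of lemma \ref{formulaattached} that avoids the cross-term algebra and needs no $1/2^{m+1}$ normalization. (2) For general dyadic $k=c/2^d$: the paper proceeds multiplicatively, rescaling its $k=1$ diagram by the scalar $k$ so that the entry is $k(b_v-a_v+1)$, whereas you proceed additively, injecting $s$ through a third signed branch (built with lemma \ref{number to formula}) and dividing by $2^q$ with star generators. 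Both routes are sound and polynomial: the paper's makes the reuse of the $\StateEq$ diagrams transparent and confines the comparison to a small two-wire gadget, while yours is more self-contained and makes the claim ``entry $=a_v-b_v+k$'' literal. The branch-pinning details you defer (fixing each branch's foreign variables to $True$, pinning $t$ in the extra branch) are indeed routine; they are the same device the paper uses in lemma \ref{concat formulae}. One peripheral inaccuracy: your closing remark that the $0/1$-valued gadget set ``suffices only for the threshold value $2^m$'' has no counterpart in the paper's corollary and should simply be dropped; it does not affect the reduction.
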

\begin{proof}
    We reduce from $\SATandComparecSAT$. First, we focus on the cases $k\in \{0,1\}$. Let $(m,n,\psi,\rho)$ be any instance. The goal is to construct a diagram that when evaluated on a state corresponding to valuation $v$ of $x_1, \dots, x_n$ results in the number $\csat(v(\rho), y_1,\dots,y_m) - \csat(v(\psi), z_1,\dots,z_m) + k$. I.e. the resulting number equals $k$ precisely when $v$ is a witness for the initial $\SATandComparecSAT$ instance.
    
    In the previous proof, we showed how to get the numbers $\csat(v(\rho))$ and $\csat(v(\psi))$ in phase-free ZH. The difference can be achieved by utilizing white and dark NOTs. The addition of $k$ can be achieved by modifying the pair of initial boolean formulae. Consider the following diagram:

    \begin{center}
        \tikzfig{dentry}
    \end{center}

    Where $\psi'$ and $\rho'$ are defined as follows:
    \begin{align*}
        \psi' &= (\psi \wedge \neg y_{m+1})\\
        \rho' &= \begin{cases}
            (\rho \wedge \neg z_{m+1}),\quad &\text{when $k=0$}\\
            (\rho \wedge \neg z_{m+1}) \vee (z_1 \wedge \dots \wedge z_m \wedge z_{m+1}),\quad &\text{when $k=1$}
        \end{cases}
    \end{align*}

    Again, let $\ket{v} = \ket{v_1 \dots v_n}$ be any state from the $n$ qubits computational basis. Then, $v$ gives a valuation of $x_1, \dots, x_n$, by taking $v(x_i) = False$ when $v_i = 0$ and $v(x_i) = True$ when $v_i = 1$. Let $\mathcal{Y}' = y_1 \dots y_m y_{m+1}$ and $\mathcal{Z}' = z_1 \dots z_m z_{m+1}$.

    By considering requirements on variables $y_{m+1}$ and $z_{m+1}$, we obtain:
    \begin{align*}
        \csat(v(\psi'), y_1,\dots,y_{m+1}) &= \csat(v(\psi), y_1,\dots,y_m)\\
        \csat(v(\rho'), z_1,\dots,z_{m+1}) &= \left. \begin{cases}
            \csat(v(\rho), z_1,\dots,z_m),\quad &\text{when $k = 0$}\\
            \csat(v(\rho), z_1,\dots,z_m) + 1,\quad &\text{when $k = 1$}
        \end{cases} \right\}\\ &= \csat(v(\rho), z_1,\dots,z_m) + k
    \end{align*}
    
    Hence:
    \begin{multline*}
        \csat(v(\rho'), z_1,\dots,z_{m+1}) - \csat(v(\psi'), y_1,\dots,y_{m+1}) \\=
        \csat(v(\rho), z_1,\dots,z_m) - \csat(v(\psi), y_1,\dots,y_m) + k
    \end{multline*}

    The rightmost part of the diagram $D$ (the two nots on a cap) has matrix representation $\ket{01} - \ket{10}$. It is the only part of the proof where white not is used.
    
    Similarly to the previous proof, we get:
    \begin{align*}
        \diageval[D]\ket{v} &=
        \frac{1}{2^{m+1}} (\ket{01} - \ket{10})(\diageval[D_{\psi'}] \otimes \diageval[D_{\rho'}]) (\ket{v} \otimes \sqrt{2}^{m+1} \ket{+}^{\otimes (m+1)})^{\otimes 2} \\&=
        \frac{1}{2^{m+1}} (\ket{01} - \ket{10}) (a_1 \ket{1} + a_0 \ket{0}) \otimes (b_1 \ket{1} + b_0 \ket{0})\\
        \noalign{\centering $\text{where } \begin{cases}
            a_1 = (\csat(v(\phi'), \mathcal{Y}')\\
            a_0 = (2^{m+1} - \csat(v(\phi'), \mathcal{Y}'))\\
            b_1 = (\csat(v(\rho'), \mathcal{Z}')\\
            b_0 = (2^{m+1} - \csat(v(\rho'), \mathcal{Z}'))
        \end{cases}$}
    \end{align*}
    Therefore:
    \begin{align*}
        \diageval[D]\ket{v} &=
        \frac{1}{2^{m+1}} (a_0 b_1 - a_1 b_0) \\&=
        \frac{1}{2^{m+1}} ((2^{m+1} - a_1)b_1 - a_1(2^{m+1} - b_1)) \\&=
        \frac{1}{2^{m+1}} (2^{m+1} b_1 - a_1b_1 - 2^{m+1}a_1 + a_1b_1) \\&=
        b_1 - a_1 \\&=
        (\csat(v(\rho'), \mathcal{Z}') - (\csat(v(\phi'), \mathcal{Y}') \\&=
        \csat(v(\rho), z_1,\dots,z_m) - \csat(v(\psi), y_1,\dots,y_m) + k
    \end{align*}
    The last number equals $k$ precisely when $\csat(v(\psi), y_1,\dots,y_m) = \csat(v(\rho), z_1,\dots,z_m)$. Therefore instance of $\ContainsEntryk$ consisting of diagram $D$ is equivalent to the initial instance $(m,n,\psi,\rho)$ of $\SATandComparecSAT$.

    This ends the proof for $k \in \{ 0, 1\}$. For any other dyadic rational $k = \frac{c}{2^d}$, we can modify diagram $D$ for $k=1$ to also contain a scalar representing the new $k$ in phase-free ZH, effectively rescaling the diagram by a factor of $k$. This scalar can be obtained, for instance, by adding $d$ star generators and evaluating the formula with $c$ satisfying assignments on all of its assignments. Such a formula can be constructed by lemma \ref{number to formula}. Hence, the problem $\ContainsEntryk$ is $\npsharpp$-hard for all dyadic rationals $k$.
\end{proof}

Note, that when $k$ is not a dyadic rational, then $k$ cannot appear in the matrix representation of any phase-free ZH diagram. Then, $\ContainsEntryk$ problem is trivial -- any instance can be immediately answered with $False$.

In the above proof, we had to obtain a difference of two numbers in phase-free ZH and rescale such difference by a negative power of two. We used white not and star generators. It extends the necessary tensors over those for $\StateEq$ by tensors with matrix representation:
\begin{gather*}
    \begin{pmatrix}
        1 & 0\\
        0 & -1
    \end{pmatrix},\ \text{and }\frac{1}{2}.
\end{gather*}
The second of these is not necessary when $k = 0$. Thus, the problem $\ContainsEntry_0$ is $\npsharpp$-hard even for tensors with matrix representations containing only integers.

Combining everything from this section, we obtain the proof of the main theorem:

\begingroup\def\thetheorem{\ref{main result}}\begin{theorem}[Repeated]
    $\StateEq$ and $\ContainsEntryk$ are both $\npsharpp$-complete.
\end{theorem}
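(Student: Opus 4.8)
The plan is to assemble the main theorem directly from the containment and hardness results already established in the preceding subsections. The statement claims that both $\StateEq$ and $\ContainsEntryk$ are $\npsharpp$-complete, and completeness means membership plus hardness, so I would simply combine the two halves for each problem. For membership, theorems \ref{cont stateeq} and \ref{cont containsentry} already show that $\StateEq \in \npsharpp$ and $\ContainsEntryk \in \npsharpp$ respectively, via a polytime NDTM equipped with a $\ScalarDiagram$ oracle together with the inclusion $\RM{NP}^{\RM{FP}^{\counting{P}}} \subseteq \npsharpp$. So the containment direction requires no new work; I would just cite those two theorems.

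For the hardness direction, the cleanest route is through the oracle-change result of theorem \ref{npsharppisnpcep}, which states $\npsharpp = \npcep$. This lets me replace the target $\npsharpp$-hardness by the equivalent $\npcep$-hardness. Then theorems \ref{state equality} and \ref{contains entry} establish exactly that $\StateEq$ and $\ContainsEntryk$ are each $\npcep$-hard, by reducing from the crafted complete problem $\SATandComparecSAT$ (whose $\npcep$-completeness is guaranteed by theorem \ref{nice repr}). Chaining these together, each problem is $\npcep$-hard, hence $\npsharpp$-hard by theorem \ref{npsharppisnpcep}.

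Putting it together, I would write: by theorems \ref{cont stateeq} and \ref{cont containsentry}, both problems lie in $\npsharpp$; by theorems \ref{state equality} and \ref{contains entry} they are $\npcep$-hard, and since $\npcep = \npsharpp$ by theorem \ref{npsharppisnpcep}, they are $\npsharpp$-hard. Membership and hardness together give $\npsharpp$-completeness for both. There is essentially no obstacle at this final stage — the entire difficulty has already been discharged in the earlier subsections, particularly in constructing the artificial complete problem $\SATandComparecSAT$ and in the ZH encodings; this concluding theorem is purely a matter of quoting and combining those lemmas. The only thing to be careful about is making sure the hardness reductions in theorems \ref{state equality} and \ref{contains entry} indeed target the same problem variant (phase-free ZH diagrams with the stated dangling-edge and dyadic-rational conventions) as the membership proofs, so that the two directions genuinely speak about one and the same problem.
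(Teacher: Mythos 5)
Your proposal is correct and matches the paper's own proof, which likewise just combines theorems \ref{cont stateeq} and \ref{cont containsentry} (containment) with theorems \ref{state equality} and \ref{contains entry} (hardness), the latter transferring to $\npsharpp$-hardness via theorem \ref{npsharppisnpcep}. Your added care about the two directions addressing the same problem variant is a sensible sanity check but requires no extra argument here.
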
\addtocounter{theorem}{-1}\endgroup

\begin{proof}
    \sloppy Completeness of $\StateEq$ follows from theorems \ref{cont stateeq} and \ref{state equality}, and completeness of $\ContainsEntryk$ follows from theorems \ref{cont containsentry} and \ref{contains entry}.
\end{proof}

\section{Circuit Extraction}\label{circuitextractionsection}
As an additional result, we present how to adapt proof form \cite{debeaudrapCircuitExtractionZXDiagrams2022c} for phase-free ZH, i.e. we show that circuit extraction remains $\counting{P}$-hard for phase-free ZH diagrams.
The original proof required a $\frac{\pi}{2}$ phase spider which cannot be constructed in phase-free ZH.

\begingroup\def\thetheorem{\ref{circuit extraction hardness}}\begin{theorem}[Repeated]
    $\CircuitExtraction$ is $\counting{P}$-hard.
\end{theorem}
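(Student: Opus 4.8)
The plan is to give a polynomial-time Turing reduction from $\csat$, the canonical $\counting{P}$-complete problem, to $\CircuitExtraction$. Given a boolean formula $\phi$ on $n$ variables, I first reuse the construction behind lemma~\ref{formulaattached}: attaching white spiders to the variable edges of $D_\phi$ yields a polynomial-size single-qubit state-diagram $S_\phi$ with $\diageval[S_\phi] = (2^n - c)\ket0 + c\ket1$, where $c = \csat(\phi)$. Note that this diagram stays polynomial in size even though the encoded integers are exponentially large.

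Next I would turn this state into a single-input, single-output diagram $D$ that is genuinely proportional to a unitary. Using the input leg to control a choice between $S_\phi$ and its Pauli-modified copy $XZ\,S_\phi$ (both $X$ and $Z$ are available in phase-free ZH), I build $D$ with $\diageval[D] = \begin{pmatrix} 2^n - c & -c \\ c & 2^n - c\end{pmatrix}$. A direct check gives $\diageval[D]\,\diageval[D]^{T} = \bigl((2^n-c)^2 + c^2\bigr)\,I$, so $\diageval[D]$ is proportional to the rotation $U = R_Y(\theta_c)$ with $\tan\theta_c = c/(2^n-c)$; crucially this uses only real (dyadic) amplitudes and the Paulis, avoiding the $\frac{\pi}{2}$-phase spider of the original ZX proof. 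Since $\CircuitExtraction$ takes the gate set $\mathcal G$ as part of its input, I choose $\mathcal G$ to contain the parametrized $Y$-rotation family, so that $U$ is expressible ancilla-free by a single gate and the oracle is guaranteed to return a genuine polynomial-size circuit rather than report that none exists.

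Finally, I call the oracle on $(D,\mathcal G)$ to obtain a circuit $C$ with $\diageval[C] \propto \diageval[D]$. Because $D$ is single-qubit, $\diageval[C]$ is a product of polynomially many $2\times2$ matrices and can be evaluated exactly in polynomial time; proportionality to an element of $SO(2)$ pins down $\theta_c$ up to harmless sign and branch ambiguities, from which I recover $c = 2^n\,\tan\theta_c/(1+\tan\theta_c) = \csat(\phi)$ (treating the degenerate case $\theta_c = \tfrac{\pi}{2}$, i.e.\ $c = 2^n$, separately). This is a single-call polynomial-time Turing reduction, establishing $\counting{P}$-hardness.

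The main obstacle is the middle step: in phase-free ZH every amplitude is a real dyadic rational, so the $\frac{\pi}{2}$-phase gadget that the ZX argument uses to assemble its unitary is unavailable and must be replaced by a purely real rotation gadget, while simultaneously guaranteeing that the resulting diagram is \emph{exactly} proportional to a unitary (so the oracle cannot reject it) and that the encoded count remains efficiently readable from the returned circuit. The orthogonality identity $\diageval[D]\,\diageval[D]^{T} \propto I$ is what secures the first property, and restricting to a single qubit is what secures the second. If one insisted on a fixed finite gate set instead of a parametrized one, the angles $\theta_c$ would generally not be exactly expressible, so additional care in choosing $\mathcal G$ -- or replacing the direct readout by a binary search over comparison-style extraction queries -- would be required.
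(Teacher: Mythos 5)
Your proposal is correct and takes essentially the same route as the paper: both reduce from $\csat$ by encoding $c = \csat(\phi)$ into the state $(2^n - c)\ket{0} + c\ket{1}$ via lemma~\ref{formulaattached} and wrapping it in a controlled-$XZ$ gadget (note $XZ = iY$, and the paper describes its own construction as ``the variant of the controlled $iY$ gate''; its matrix $\begin{pmatrix} 2^n - c & c \\ c & -(2^n - c) \end{pmatrix}$ is exactly your rotation composed with $Z$), then recover $c$ in polynomial time by multiplying out the $2\times 2$ gates of whatever circuit the oracle returns. The only substantive difference is that you are more explicit about the choice of $\mathcal{G}$ than the paper (which leaves it implicit); for full rigor your parametrized rotation family should be presented by integer pairs $(p,q) \mapsto \frac{1}{\sqrt{p^2+q^2}}\begin{pmatrix} p & -q \\ q & p \end{pmatrix}$ rather than by real angles $\theta_c = \arctan\bigl(c/(2^n - c)\bigr)$, which are not finitely writable, as you yourself anticipate in your closing remark.
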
\addtocounter{theorem}{-1}\endgroup

\begin{proof}
    We reduce from $\csat$. Let $\phi$ be any boolean formula and $x_1, \dots, x_n$ be the variables on which it is considered. Consider the following diagram:
    \begin{equation*}
        \tikzfig{circuit_extraction_for_phase_free_ZH}
    \end{equation*}

    By lemma \ref{formulaattached}, $\diageval[D_{\phi}] \sqrt{2}^n \ket{+}^{\otimes n} = a_0 \ket{0} + a_1 \ket{1}$, where:
    \begin{align*}
        a_1 &= \csat(\phi)\\
        a_0 &= 2^n - \csat(\phi)
    \end{align*}
    By considering the matrix representation of the remaining part of diagram $D$ we can show that:
    \begin{gather*}
        \diageval[D] = \begin{bmatrix}
            a_0 & a_1\\
            a_1 & -a_0
        \end{bmatrix} =: M
    \end{gather*}
    which is proportional to a unitary:
    \begin{gather*}
        MM^{\dagger} = \begin{bmatrix}
            a_0^2 + a_1^2 & 0\\
            0 & a_0^2 + a_1^2
        \end{bmatrix}
    \end{gather*}
    We show how to find $\diageval[D]$ formally in the appendix \ref{circuit extraction for phase free zh applied ket}.

    The finish of the proof is the same as in \cite{debeaudrapCircuitExtractionZXDiagrams2022c}. Given a polysize ancilla-free circuit equivalent to the presented diagram, we can compute the matrix representation of the circuit in polynomial time. The obtained matrix representation is proportional to matrix $M$ above. Hence, it is possible to deduce $\csat(\phi)$. Therefore $\csat \in {\RM{FP}}^{\CircuitExtraction}$, which means precisely that $\CircuitExtraction$ is $\counting{P}$-hard.
\end{proof}

Our construction is a bit more complex than the one from the proof for ZX calculus -- rather than applying the controlled $iX$ gate, we apply the variant of the controlled $iY$ gate that can be achieved in phase-free ZH. It has the advantage of working in every graphical calculus in which one can represent $\diageval[D_{\phi}]$, Hadamard gate, and equivalents of $\pi$ phase spiders from ZX. Similarly to \cite{debeaudrapCircuitExtractionZXDiagrams2022c}, the hardness of different variants of circuit extraction follows from the same construction as above.

\section{Conclusions and further work}\label{lastsection}
We gave two examples of $\npsharpp$-complete problems arising in phase-free ZH. The main challenge was crafting a suitable $\npsharpp$-complete problem that could be used for reduction. To define such a problem, we showed that $\npsharpp = \npcep$ and proceeded in a way similar to the proof of the Cook-Levin theorem.

\subsection{Comparing Diagrams}
Our results could suggest that comparing diagrams should be $\rm{coNP}^{\counting{P}}$-complete. At the same time, comparing quantum circuits is simpler, in the sense that upper bounds are tighter. Determining whether a quantum circuit is almost equivalent to identity is $\rm{QMA}$-complete \cite{janzingNonidentitycheckQmacomplete2005} (Quantum Merlin-Arthur \cite{watrousSuccinctQuantumProofs2000}), while an exact non-identity check is $\RM{NQP}$-complete \cite{tanakaEXACTNONIDENTITYCHECK2010} (Nondeterministic Quantum Polynomial-time \cite{adlemanQuantumComputability1997}). Both these classes are within $\RM{PP}$, and thus within $\RM{P}^{\counting{P}}$. This way, graphical calculi like phase-free ZH can be understood as more powerful than quantum circuits.

The problems we studied closely relate to problems $\CompareDiagrams$ and the following problem:

\begin{quote}
    $\IsZero$\\
    \textbf{Input: } a phase-free ZH diagram $D$.\\
    \textbf{Output:} $True$ if and only if $\diageval[D]$ is the zero matrix.
\end{quote}

The two problems defined above are essential. Not only the question \textit{``Are two diagrams equal''} is the most natural question one can ask about diagrams, but the problem $\CompareDiagrams$ was indirectly used to obtain the upper bound of $\CircuitExtraction$ in \cite{debeaudrapCircuitExtractionZXDiagrams2022c}, i.e. $\CircuitExtraction$ is in $\RM{FNP}^{\CompareDiagrams}$.

These problems are very similar to the two problems we have presented. $\StateEq$ asks whether the two matrix representations agree \textit{somewhere}, while $\CompareDiagrams$ asks whether the two matrices agree \textit{everywhere}. Similarly, $\ContainsEntry\mathbf{_0}$ asks whether the matrix agrees with zero matrix \textit{somewhere}, and $\IsZero$ asks for agreement \textit{everywhere}. The problems defined here are therefore asking whether a property is \textit{universal} rather than \textit{existential}. The upper bounds for these problems follow a similar idea as the upper bounds for $\StateEq$ and $\ContainsEntry$. A NDTM can pick an entry, and using the $\counting{P}$ oracle it accepts when the property holds. It leads to upper bounds $\RM{coNP}^{\counting{P}}$.

However, the same hardness proof does not work. The main trick in our hardness proof was that an NDTM with $\counting{P}$ oracle could instead guess answers to the $\cSAT$ instances and verify them at the end with a single $\cep$ oracle query. However, the transformation creates lots of new non-deterministic paths on which NDTM chooses at least one answer incorrectly. Informally, there is no method in ZH to \textit{use} answer to the question of whether boolean formulae have different numbers of satisfying assignments or to verify that two amplitudes are different. Similarly, it is unclear whether $\RM{coNP}^{\counting{P}}$ must equal to $\RM{coNP}^{\cep[1]}$ if for the second class we demand an oracle answer to be returned as the answer to the entire computation.

It would be interesting to research whether $\CompareDiagrams$ and $\IsZero$ are $\RM{coNP}^{\counting{P}}$-complete, or at least obtain hardness better than $\cep$. Such immediate hardness can be obtained by reducing $\phi, \psi$ instance of $\ComparecSAT$ to diagrams $D_{\phi}, D_{\psi}$ with white spiders attached to all dangling edges. One of the possible avenues is to pay more attention to known complexity results, such as $\RM{NQP} = \RM{coC_=P}$ \cite{fennerDeterminingAcceptancePossibility1999}, and use $\RM{NQP}$ oracle instead of $\RM{C_=P}$.

\subsection{Tensor networks and complexity classes}
When proving the hardness of $\StateEq$ and $\ContainsEntryk$, we touched on the equivalent problems defined for tensors instead of diagrams. It may be interesting to classify complexity bounds for tensors constructed with different generator sets. Tensors required for phase-free ZH are sufficient for hardness results. In particular, generators of phase-free ZH can all be achieved in other related graphical calculi, like ZX. The upper bound is a bit more tricky. For instance, whether the presented problems are $\npsharpp$-complete for ZX diagrams depends on the allowed phases and presentations of such phases in the input. The infinite number of generators can lead to $\ScalarDiagram$ not being in $\RM{FP}^{\counting{P}}$, and hence the $\npsharpp$ can also fail.

Besides tensor networks, there is an interesting connection to the class $\RM{PostBQP}$, i.e. quantum computation with post-selection of measurement outcomes. Graphical calculi can be viewed as circuits with post-selection of measurement outcomes, or as quantum circuits where arbitrary linear maps rather than unitaries can be used as gates. Thus, diagrams represent computation schemes for the class $\RM{PostBQP}$. Further, it is known that $\RM{PostBQP} = \RM{PP}$ and hence $\RM{NP}^{\RM{PostBQP}} = \RM{NP}^{\RM{PP}} = \npsharpp$, where the last equality follows from Toda's theorem \cite{todaPPHardPolynomialTime1991}. It would be interesting to see whether the approach of interpreting diagrams as $\RM{PostBQP}$ computations could lead to new complexity bounds for problems appearing when working with graphical calculi. Further, the study of the class $\RM{NP}^{\RM{PostBQP}}$ could lead to new observations about the equal $\npsharpp$. The problem $\SATandComparecSAT$ we crafted can be viewed as a new natural complete problem for the mostly unexplored class $\npsharpp$.

\subsection{Counting complexity}
In \cite{laakkonenGraphicalSATAlgorithm2022}, phase-free ZH was successfully used to develop a new algorithm for a variant of $\cSAT$ with small clause density, beating other existing algorithms. Another connection of phase-free ZH with counting complexity was demonstrated in \cite{laakkonenPicturingCountingReductions2023}, where ZH rewriting rules simplified proofs of known results from counting complexity. However, only scalar diagrams were used. It would be interesting to check whether the inclusion of dangling edges may be used for other counting complexity proofs, for instance, those involving $\npsharpp$.

\subsection{Acknowledgement}
I want to thank my supervisor Miriam Backens for helping me put this work together and sharing their essential knowledge in the field of counting problems. I also want to thank Niel de Beaudrap and John van de Wetering for useful discussions about their work regarding circuit extraction hardness.


\phantomsection

\addcontentsline{toc}{chapter}{Bibliography}

\begin{sloppypar}
\bibliographystyle{quantum}
\bibliography{myrefdbwin}
\end{sloppypar}


\appendix

\section{Complexity Classes}\label{cococlasses}
Here, we define all complexity classes relevant to the presented work.

\begin{definition}[Complexity Classes]$ $
    \begin{itemize}
        \item $\RM{P}$ -- the class of problems solvable by a polytime deterministic TM,
        \item $\RM{NP}$ -- the class of problems solvable by a polytime NDTM,
        \item $\RM{coNP}$ -- the class of complements $\bar{A}$ of problems $A \in \RM{NP}$,
        \item $\RM{FP},\ \RM{FNP}$ -- function extensions of $\RM{P}$ and $\RM{NP}$,
        \item $\counting{P}$ -- the class of problems of the form: given a polytime NDTM $M$ and an input $w$, compute $\apaths(M,w)$,
        \item $\cep$ -- the class of problems $A$ for which there exists a polytime NDTM $M$ with the property, that $w \in A$ if and only if $\apaths(M,w) = \rpaths(M,w)$ (equivalently $\apaths(M,w) = \frac{1}{2}\paths(M,w)$),
        \item $\npsharpp$ -- the class of problems solvable by a polytime NDTM with access to $\counting{P}$ oracle,
        \item $\npcep$ -- the class of problems solvable by a polytime NDTM with access to $\cep$ oracle that is used precisely once,
        \item $\RM{PP}$ -- the class of problems $A$ for which there exists a polytime NDTM $M$ with the property, that $w \in A$ if and only if $\apaths(M,w) \ge \frac{1}{2}\paths(M,w)$,
        \item $\RM{BQP}$ -- the class of problems solvable by a polytime quantum TM,
        \item $\RM{PostBQP}$ -- informally, the class of problems solvable by a polytime quantum TM with postselection of measurement outcomes,
        \item $\RM{NQP}$ -- the class of problems solvable by a non-deterministic quantum TM . It is equal to $\RM{co}\cep$.
    \end{itemize}
\end{definition}

\section{Evaluation of the diagram for circuit extraction hardness}\label{circuit extraction for phase free zh applied ket}

Graphical proof for evaluation of the diagram in the proof of circuit extraction hardness. Equalities are labelled by lemmas and axioms from \cite{backensCompletenessZHcalculus2023}. The final matrix follows from the definition of nots in phase-free ZH.

\begin{center}
\begin{adjustbox}{width=\columnwidth-1cm,center}
    \tikzfig{applied_ket_new_0}
\end{adjustbox}\vspace{2cm}
\begin{adjustbox}{width=\columnwidth-1cm,center}
    \tikzfig{applied_ket_new_1}
\end{adjustbox}\vspace{2cm}
\begin{adjustbox}{width=\columnwidth-1cm,center}
    \tikzfig{overall_explained}
\end{adjustbox}
\end{center}

\end{document}